\documentclass[letterpaper, 10 pt, journal, twoside]{IEEEtran}
\usepackage{tikz}
\usetikzlibrary{arrows,shapes,positioning,automata,backgrounds,petri}
\usepackage{graphics} 
\usepackage{pgfplots}
\usepackage{cite}
\usepackage{amsmath,amssymb,amsfonts,amsthm}
\usepackage{algorithm}
\usepackage{algpseudocode}
\usepackage{graphicx}
\usepackage{textcomp}
\usepackage{lipsum}
\usepackage[short]{optidef}
\newtheorem{remark}{Remark}

\newtheorem{lemma}{Lemma}
\newtheorem{assumption}{Assumption}
\newtheorem{theorem}{Theorem}

\usepackage{cleveref}
\usepackage{multicol}
\usepackage{comment}


\newcommand{\E}{\ensuremath{\mathbf{E}}} 
\newcommand{\Z}{\ensuremath{\mathbf{Z}}} 
\newcommand{\ev}{\ensuremath{\mathbb{E}}} 
\newcommand{\var}{\ensuremath{\text{Var}}} 
\newcommand\myeq{\stackrel{\mathclap{\scriptsize\mbox{d}}}{=}}
\newcommand{\Acl}{\ensuremath{A_{CL}}}
\newcommand{\thetatrue}{\ensuremath{\theta^{\text{true}}}}

\newcounter{storeeqcounter_one}
\newcounter{storeeqcounter_two}
\newcounter{storeeqcounter_three}
\newcounter{tempeqcounter}

\allowdisplaybreaks

\begin{document}
\title{Stochastic MPC with robustness to bounded parametric uncertainty}
\author{Elena Arcari$^\star$, Andrea Iannelli$^\dagger$, Andrea Carron$^\star$, Melanie N. Zeilinger$^\star$
\thanks{$^\star$The authors are with the Institute for Dynamical Systems and Control, ETH Zurich, ZH-8092, Switzerland: {\tt\small  [earcari|carrona|mzeilinger]@ethz.ch}.}
\thanks{$^\dagger$ The author is with the Automatic Control Laboratory, ETH Zurich, ZH-8092, Switzerland: {\tt\small iannelli@control.ee.ethz.ch}.}
}
 
\maketitle

\begin{abstract}
The performance of model-based control techniques strongly depends on the quality of the employed dynamics model. If strong guarantees are desired, it is therefore common to robustly treat all possible sources of uncertainty, such as model inaccuracies or external disturbances. This, however, can result in overly conservative control strategies. In this paper, we present a stochastic model predictive control approach for discrete-time LTI systems subject to bounded parametric uncertainty and potentially unbounded stochastic additive noise. The proposed scheme makes use of homothetic tubes along the prediction horizon for a robust treatment of parametric uncertainty. Stochastic noise is handled by non-conservatively tightening constraints using the concept of probabilistic reachable sets (PRS). In order to accommodate all possible parametric uncertainties, we provide a strategy for generating ``robustified" PRS based only on first and second moments of the noise sequence. In the case of quadratic cost functions, and under a further i.i.d. assumption on the noise distribution, we also provide an average asymptotic performance bound for the l2-norm of the closed-loop state. Finally, we demonstrate our scheme on both an illustrative example, and in a building temperature control problem.      
\end{abstract}

\begin{IEEEkeywords}
stochastic model predictive control, bounded parametric uncertainty, chance constraints
\end{IEEEkeywords}

\section{Introduction}
\label{sec:introduction}

In the last decades, model predictive control (MPC) has established itself as the state-of-the-art approach for high performance control of constrained dynamical systems. Its ability to address crucial issues, such as safety, has steered research interest towards the derivation of rigorous theoretical guarantees, even in the challenging scenario of systems subject to significant inaccuracies in the model, and to external disturbances. In MPC, these problems can be tackled by resorting to robust approaches \cite{Bemporad_Springer_1999}, considering worst-case scenarios of all uncertainties affecting the dynamics. A robust design may result, however, in overly conservative control strategies, particularly when information about the uncertainty is available, e.g. in the form of a distribution or a parametric uncertainty. In this case, it can be beneficial to make use of a stochastic MPC approach, where constraints are imposed in probability, i.e. formulated as chance constraints for which a certain amount of violation is permitted, see e.g.~\cite{Mesbah_CSM_2016}. Safety, however, does not only depend on how constraints are treated, but also on the accuracy of the dynamics model. This motivates the developments in this paper, offering an MPC formulation that handles robustly the presence of parametric uncertainty, which is assumed to be contained in a bounded polytopic set, and external disturbances modelled as additive noise, which have potentially unbounded, correlated distributions.

Previous work has provided strong guarantees in terms of closed-loop feasibility and stability only under strict boundedness assumptions over all sources of uncertainty affecting the system. For instance, early results have considered a state and input dependent term~\cite{Raimondo_2009_EJC,Pin_TAC_2009}, or alternatively stochastic multiplicative and additive disturbances~\cite{Kouvaritakis_Springer_2015, MC09,JF19}. Under the same assumptions, recent efforts have been made to improve control performance by including model learning~\cite{Aswani_Automatica_2013,Soloperto_NMPC_2018,Manzano_Automatica_2020}, or parameter adaptation~\cite{ML20, JK20}, and dual actions~\cite{AP20,RS20}.  

In the presented paper, the main idea is to split the dynamics in a similar way to tube MPC~\cite{Chisci_Automatica_2001,Mayne_Automatica_2005,Kouvaritakis_Springer_2015}, and combine the use of homothetic tubes~\cite{SR12,ML20} for handling parametric uncertainties along the prediction horizon, together with probabilistic reachable sets (PRS) for tightening state and input constraints, building on results in~\cite{Mesbah_CSM_2016,Lorenzen_TAC_2017,LH20,Lorenzen_IJRNC_2018}. We propose a procedure for ``robustifying" the PRS design with respect to all parametric uncertainties, by only assuming knowledge of first and second moments of the noise sequence affecting the system. Feedback is introduced through the cost function (indirect feedback~\cite{LH20}), which at each time-step is computed with respect to the latest state measurement and parameter update, provided that the point-wise parameter estimate is contained in the uncertainty set at each time-step.  The combined use of homothetic tubes, ``robustified" PRS (RPRS), and indirect feedback offers strong closed-loop guarantees and recursive feasibility. Furthermore, for i.i.d noise sequences and quadratic cost functions, we derive an average asymptotic bound on the~$l2$-norm of the state. 

A similar scenario, with both model and external uncertainties, has been tackled in the context of model-based safety filters~\cite{KW19,AD21}. Differently from these results, we focus on exploiting the parametric structure of the uncertainty, enabling an MPC formulation that simultaneously achieves high performance while providing theoretical guarantees. Complexity of the proposed overall control scheme is not affected by the RPRS computations since these are constructed offline. Furthermore, the flexibility of the scheme allows for incorporating ideas from, e.g.,~\cite{Koehler_TAC_2021, JK19}, where computational efficiency is prioritized with respect to conservatism by simplifying the homothetic tube online optimization. As a consequence, the presented formulation allows for various extensions to accommodate practical specifications~\cite{Mayne_ARC_2016,Saltik_JPC_2018}, while preserving the analysis. 

The structure of the paper is as follows. A receding-horizon stochastic optimal control problem for discrete-time LTI systems is introduced in Section~\ref{sec:Preliminaries and Problem Formulation}. The problem specifications are explained in Section~\ref{sec:RS-MPC}, with particular focus on the RPRS construction procedure. The MPC formulation and its closed-loop analysis are detailed in Section~\ref{sec:RS-MPC Analysis}. Finally, numerical results in Section~\ref{sec:numerical results} are provided for both an illustrative example, and a building temperature control task.

\subsection*{Notation}
Given a matrix $X$,~$X_{i:l,j:m}$ indicates a block of entries enclosed between rows~$i$ and~$l$ and columns~$j$ and~$m$, while for a vector~$x$,~$[x]_i$ is the~$i$-th entry.~$\ev[\cdot]$ and~$\var[\cdot]$ denote the expected value and the variance of a random variable, respectively. The convex hull of a set of points is denoted by~$\text{co}\{ \cdot\}$. ~$\mathbf{1}_{n,m}$ and~$\mathbf{0}_{n,m}$ denote matrices of ones and zeros of dimension~$n \times m$, respectively, and~$I_n$ is the identity matrix of dimension~$n$. The notation~$x_{i|k}$ is used to refer to predicted quantities, where~$k$ identifies the time when the prediction is computed and~$i$ denotes the time index for which the prediction is made.

\section{Problem formulation}
\label{sec:Preliminaries and Problem Formulation}

We consider the control of uncertain linear, time-invariant discrete-time dynamical systems of the form 
\begin{equation}
    x_{k+1} = A(\theta)x_k + B(\theta)u_k + w_k,
    \label{eq:system}
\end{equation}
where the state is denoted by~$x_k \in \mathbb{R}^n$, the input by~$u_k \in \mathbb{R}^m$, and the additive disturbance by~$w_k \in \mathbb{R}^n$. The system matrices depend affinely on an unknown parameter vector~$\theta \in \Theta \subseteq \mathbb{R}^p$
\begin{subequations}
\begin{align}
    & A(\theta) = A_0 + \sum_{i=1}^{p} A_i \; [\theta]_i, \\
    & B(\theta) = B_0 + \sum_{i=1}^{p} B_i \; [\theta]_i,
\end{align}
    \label{eq:parametric_uncertainty}
\end{subequations}
where~$\Theta = \{ \theta  \; | \; H_\theta \theta \leq h_\theta \}$, with~$H_\theta \in \mathbb{R}^{q \times p}$, contains the true unknown parameter vector~$\thetatrue$. 

We have access to state measurements of the true system dynamics resulting from~$\thetatrue$, and we can therefore keep track of~$x_k^{\text{true}}$ evolving as
\begin{equation}
x_{k+1}^{\text{true}} = A(\thetatrue)x_k^{\text{true}} + B(\thetatrue)u_k + w_k.
\label{eq:true dynamics}
\end{equation}
The goal is to control system~\eqref{eq:system} for the duration~$T$ of a finite-horizon task, despite parametric uncertainty~\eqref{eq:parametric_uncertainty}, and a potentially non i.i.d. stochastic disturbance sequence~$W = [w_0^\top, \dots, w_{T-1}^\top]^\top \sim Q_W$ acting on the system. The distribution~$Q_W$ may have unbounded support, and we only assume knowledge of its first and second moment. Additionally, the system is subject to constraints on both states and inputs. These are formulated as chance constraints, and are required to be satisfied point-wise at each time-step~$k \geq 0$ with a probability conditioned on the true initial state~$x^{\text{true}}_0$ 
\begin{equation}
    \text{Pr}( x_k \in \mathcal{X} \; | \; x^{\text{true}}_0) \geq p_x, \quad \text{Pr}( u_k \in \mathcal{U} \; | \; x^{\text{true}}_0) \geq p_u, 
\label{eq:chance constraints}
\end{equation}
where~$\mathcal{X} = \{ x \; | \; Fx \leq \mathbf{1}_{n_x,1} \}, \; F \in \mathbb{R}^{n_x \times n}$, and~$\mathcal{U} = \{ u \; | \; Gu \leq \mathbf{1}_{n_u,1} \}, \; G \in \mathbb{R}^{n_u \times m}$. Additionally,~$p_x, p_u \geq 0$ are the assigned probability levels.

We can formulate this as an optimal control problem to be solved in a receding horizon fashion. In order to obtain a tractable formulation that can handle the presence of chance constraints, we restrict the class of control policies over which we optimize to the following affine state feedback law
\begin{equation}
    u_k = Kx_k + v_k,
    \label{eq:feedback_law}
\end{equation}
where~$K$ satisfies the following commonly used assumption in robust MPC~\cite{SR13,ML20}.
\begin{assumption}
The state feedback gain~$K \in \mathbb{R}^{m \times n}$ is chosen such that the closed-loop dynamics~$\Acl(\theta) = A(\theta) + B(\theta)K$ is stable~$\forall \theta \in \Theta$, i.e. there exists a positive definite~$P \succ 0$ such that 
\begin{equation*}
    \Acl(\theta)^\top P \Acl(\theta) - P \prec 0, \; \forall \theta \in \Theta. 
\end{equation*}
    \label{asmp:feedback_law}
\end{assumption}

Furthermore, we define the auxiliary variables~$z_k \in \mathbb{R}^n$ and~$e_k \in \mathbb{R}^n$ as 
\begin{equation}
    e_k = x_k - z_k,
    \label{eq:nom_plus_err}
\end{equation}
with the aim of separating the effect of the two different uncertainty sources. Substituting~\eqref{eq:nom_plus_err} into the overall dynamics~\eqref{eq:system}, we obtain  
\begin{subequations}
\begin{align}
    & z_{k+1} = \Acl(\theta) z_k  + B(\theta)v_k, \label{eq:robust_dyn}\\
    & e_{k+1} = \Acl(\theta) e_k + w_k. \label{eq:stochastic_dyn} 
\end{align}
\label{eq:split_dynamics}
\end{subequations}
A similar split starting from~\eqref{eq:true dynamics} determines   
\begin{subequations}
\begin{align}
    & z^{\text{true}}_{k+1} = \Acl(\thetatrue) z^{\text{true}}_k  + B(\thetatrue)v_k, \label{eq:robust_dyn_true}\\
    & e_{k+1}^{\text{true}} = \Acl(\thetatrue) e^{\text{true}}_k + w_k, \label{eq:stochastic_dyn_true} 
\end{align}
\label{eq:split_dynamics_true}
\end{subequations}
where~\eqref{eq:robust_dyn_true} represents the true nominal dynamics, while~\eqref{eq:stochastic_dyn_true} represents the true error dynamics induced by the presence of additive noise. This recovers the setting addressed in~\cite{LH20}, where the true model is assumed to be known.  

The idea is to use the split dynamics in~\eqref{eq:split_dynamics}, and, in contrast with standard tube MPC methods, predict a tube for the dynamics of~$z_k$ rather than a trajectory. Additionally, the stochastic noise~$w_k$ affecting~\eqref{eq:stochastic_dyn} is handled by constructing another tube for~$e_k$ that is used for tightening both state and input constraints such that the probability levels in~\eqref{eq:chance constraints} can be satisfied. We formulate hereafter an associated optimal control problem to be solved over a horizon~$N < T$ in a receding horizon fashion (MPC) 
\begin{mini!}[0]{ \substack{\{v_{i|k}\}_{i=0}^{N-1}, \\  \{\Z_{i|k}\}_{i=0}^{N}} }{ \! \! \! \! \ev_{W_k} \left[ \sum_{i=0}^{N-1} l_i(\bar x_{i|k},\bar u_{i|k}) + l_f(\bar x_{N|k}) \right]}{\label{eq:mpc_highlevel}}{}
\addConstraint{\! \! \! \! \bar x_{i+1|k} = A(\bar\theta_k)\bar x_{i|k} + B(\bar\theta_k)\bar u_{i|k} + w_{i|k} \label{eq:10b}}{}{}
\addConstraint{\! \! \! \! \bar u_{i|k} = K \bar x_{i|k} + v_{i|k} \label{eq:10c}}{}{}
\addConstraint{\! \! \! \!  W_k = [w_{0|k}^\top, \dots, w_{N-1|k}^\top]^\top \sim Q_{W_k} \label{eq:10d}}{}{}
\addConstraint{\! \! \! \! \Z_{i|k} \subseteq \mathcal{X} \ominus \E_k \label{eq:10e}}{}{}
\addConstraint{\! \! \! \! K\Z_{i|k} \oplus v_{i|k} \subseteq \mathcal{U} \ominus \E^u_k \label{eq:10f}}{}{}
\addConstraint{\! \! \! \! \Z_{N|k} \subseteq \mathcal{Z}_f \label{eq:10g}}{}{}
\addConstraint{\! \! \! \! \Acl(\theta) \Z_{i|k} + B(\theta) v_{i|k} & \subseteq \Z_{i+1|k}, \; \forall \theta \in \Theta \label{eq:tube containment}}{}{}
\addConstraint{\! \! \! \! \bar x_{0|k} = x^{\text{true}}_0 \label{eq:10i}}{}{}
\addConstraint{\! \! \! \! z^{\text{true}}_0 \in \Z_{0|k}. \label{eq:10j}}{}{}
\end{mini!}
The overall cost function to be optimized is computed as the sum of potentially time-varying stage costs~$l_i(\cdot,\cdot), \; i \in [0 , N-1]$, and a terminal cost~$l_f(\cdot)$. These are evaluated with respect to a point estimate of the unknown parameter~$\theta$ that we denote by~$\bar \theta_k \in \Theta \; \forall k \geq 0$. Additionally, due to the additive stochastic noise, the cost is defined as the expectation with respect to a predicted noise sequence~$W_k$ whose distribution~$Q_{W_k}$ is defined by the conditional distribution~$p([w_{k}^\top, \dots, w_{k+N-1}^\top | w_{0}^\top, \dots, w_{k-1}^\top]^\top)$. We propose to deal with the presence of parametric model uncertainty by optimizing over a sequence of bounded sets~$\{\Z_{i|k}\}_{i=0}^N$ along the prediction horizon that we refer to as \textit{nominal tube}, ensuring robust containment of~$z_k$ for all~$\theta \in \Theta$. Furthermore, we design a sequence of confidence regions~$\E_k$ containing~$e_k$, which we use to tighten state constraints. We refer to~$\{ \E_k \}_{k=1}^T$ as the \textit{stochastic error tube}, for which containment holds for all~$\theta \in \Theta$ with a probability dictated by the distribution of the sequence~$W$. Similarly, we construct sets~$\E^u_k$ containing in probability~$e^u_k = Ke_k$, which are used to tighten input constraints. The following sections provide details on how to design both the nominal tube, and the confidence regions needed for state and input constraint tightening~\eqref{eq:10e},~\eqref{eq:10f}. Further clarifications are also given regarding the construction of an appropriate terminal set~$\mathcal{Z}_f$~\eqref{eq:10g}, and the reformulation of the nominal tube containment condition~\eqref{eq:tube containment}, such that ultimately the overall problem is recursively feasible, and guarantees closed-loop chance constraint satisfaction~\eqref{eq:chance constraints}. The final formulation will therefore also ensure that initial containment of~$z_0^\text{true}$~\eqref{eq:10j} is guaranteed despite not having access to the true nominal dynamics of the system~\eqref{eq:robust_dyn_true}.
\begin{remark}{\textbf{Parameter estimate update}} \\
We do not make any assumption on the learning scheme chosen to update the point estimate~$\bar \theta_k$. The only condition to be satisfied is containment in the bounded set~$\Theta$ that can be always guaranteed by adding a projection step to any update scheme.
\label{rmk:parameter_learning}
\end{remark}

\section{Tractable formulation of stochastic MPC with bounded parametric uncertainty}
\label{sec:RS-MPC}
In the following section, we provide details regarding the nominal tube, and how its structure can be exploited for reformulating the containment condition along the prediction horizon. Then, we focus on the procedure for constructing confidence regions for any noise sequence affecting the system, and for determining an appropriate constraint tightening despite the presence of parametric uncertainty. Finally, the overall stochastic MPC problem is defined, expanding the formulation provided in~\eqref{eq:mpc_highlevel}.

\subsection{Nominal tube}
\label{subsec:nominal tube}

The nominal tube predicted along a horizon of length~$N$ is defined as a sequence of sets~$\Z_{i|k}, \; i \in [0\!:\!N]$. In order to ease computations, these sets are restricted to be translations and scalings of a given convex set~$\bar{\Z}$, which are typically referred to as homothetic tubes~\cite{SR13}
\begin{equation}
    \Z_{i|k} = \{ s_{i|k} \} \oplus \alpha_{i|k} \bar{\Z},
    \label{eq:homothetic error tube}
\end{equation}
where~$s_{i|k} \in \mathbb{R}^n$,~$\alpha_{i|k} \in \mathbb{R}$. 
By choosing the base set~$\bar{\Z}$ to be a polytope defined as~$\{ \bar{z} 
\; | \; H_z \bar{z} \leq \mathbf{1}_{r,1}\} \; = \;  \text{co}\{\bar{z}^{1}, \dots,\bar{z}^{v_1} \}$, the containment condition~\eqref{eq:tube containment} can be reformulated similarly to~\cite{ML20} as
\begin{align}
  & \begin{aligned}
    H_z( \Acl(\theta)(s_{i|k} + \alpha_{i|k} \bar{z}) + B(\theta) v_{i|k} -  s_{i+1|k}) & \\
  - \alpha_{i+1|k} \mathbf{1}_{r,1} \leq \mathbf{0}_{r,1}, \quad \forall \bar{z} \in \bar{\Z},  \; \theta \in \Theta, \nonumber &\
\end{aligned} \nonumber \\  & \Leftrightarrow \nonumber \\
 & \begin{aligned} 
 H_z( \Acl(\theta)(s_{i|k} + \alpha_{i|k} \bar{z}^j) + B(\theta) v_{i|k} -  s_{i+1|k}) & \\
 - \alpha_{i+1|k} \mathbf{1}_{r,1} \leq \mathbf{0}_{r,1},  \quad \forall j \in  \{ 1, \dots ,  v_1 \},  \; \theta \in \Theta, \nonumber &
 \end{aligned} \nonumber \\ & \Leftrightarrow \nonumber \\
&   \max_{\theta \in \Theta} \{H_z D^j_{i|k} \theta\} +H_z d_{i|k}^j  \leq \alpha_{i+1|k} \mathbf{1}_{r,1}, \quad \forall j \in \{1,\dots, v_1\},  \label{eq:max_theta}
\end{align}
where the last expression is obtained by using~\eqref{eq:parametric_uncertainty}, and by  introducing the following terms for all~$j \in \{ 1, \dots ,v_1\}$
\begin{align*}
    & p_{i|k}^{j} = s_{i|k} + \alpha_{i|k} \bar{z}^{^j}, \\
    & r_{i|k}^{j} = K (s_{i|k} + \alpha_{i|k}\bar{z}^{^j} ) + v_{i|k}, \\ 
    & D_{i|k}^{j} = D(p_{i|k}^{j}, r_{i|k}^{j}), \\
    & d_{i|k}^j = (A_0 + B_0K)(s_{i|k} + \alpha_{i|k}\bar{z}^{^j}) + B_0 v_{i|k} - s_{i+1|k},
\end{align*}
where the function~$D(a,b) : \mathbb{R}^n \times \mathbb{R}^m \rightarrow \mathbb{R}^{n \times p}$ maps vectors~$a$ and~$b$ to a matrix whose columns are defined with respect to~$\{A_i \}_{i=1}^p$ and~$\{B_i \}_{i=1}^p$~\eqref{eq:parametric_uncertainty}, resulting in 
\begin{equation}
D(a,b) = \begin{bmatrix} A_1 a + B_1 b, \dots, A_p a + B_p b\end{bmatrix}.
\label{eq:D()}
\end{equation}
Finally, maximization in~\eqref{eq:max_theta} can be cast as its corresponding dual problem, i.e. minimization with respect to the dual variables~$\{ \Lambda^j_{i|k} \}^{v_1}_{j=1}$. We can therefore reformulate~\eqref{eq:tube containment} as
\begin{subequations} 
\begin{align}
    & \Lambda^j_{i|k} h_\theta + H_z d_{i|k}^{j} \leq \alpha_{i+1|k} \mathbf{1}_{r,1}, \\
    & H_z D_{i|k}^{j} = \Lambda^j_{i|k} H_\theta, \\
    & \Lambda^j_{i|k} \in \mathbb{R}^{r \times p}_{\geq 0},
\end{align}
\label{eq:dual_reformulation}
\end{subequations}
where we also include positivity conditions and Lagrangian stationarity, thus ensuring optimality. 

\subsection{Stochastic error tube}
\label{subsec:Stochastic error tube}

The purpose of the stochastic error tube is to bound in probability~$e_k$, and in particular the true error state~$e^{\text{true}}_k$ at each time-step~$k$. A procedure that makes use of the first and second moments of~$e^{\text{true}}_k$ for constructing such confidence regions, i.e.~$k$-step PRS, is given in~\cite{LH20}. Since the matrix~$\Acl(\thetatrue)$ determining the dynamics of~$e^{\text{true}}_k$ is unknown in the considered setup, these sets cannot be directly constructed. The idea is to formally define a ``bound" on the moments of~$e^{\text{true}}_k$ that can be used to construct a sequence of confidence regions, which we refer to as~$k$-step ``robustified" PRS (RPRS)~$\E_{k}$, satisfying the following condition for~$k \in [1\!:\!T]$
\begin{equation}
    \text{Pr}(e_k \in \E_k \; | \; e^{\text{true}}_0) \geq p, \; \forall \theta \in \Theta.
    \label{eq:PRS_robust}
\end{equation}
The remainder of this section is devoted to detailing a procedure for synthesizing~$k$-step RPRS both in the case of i.i.d., and correlated noise sequences affecting the system dynamics.

Given~$e^{\text{true}}_0 = \mathbf{0}_{n,1}$, and~$\ev[W] = \mathbf{0}_{nT,1}$, then~$\ev[e_k] = \mathbf{0}_{n,1}, \; \forall k \geq 0$ and~$\forall \theta \in \Theta$. Therefore, each confidence region associated with a particular value of~$\theta$ remains centered at the origin (see e.g.~\cite{LH18}).   
\begin{remark}{\textbf{Noise sequence with non-zero mean}} \\
Disturbance sequences with first moment different from zero,~$\overline{W} = \ev[W]$, can be considered by defining a stochastic sequence~$\widetilde{W} = W - \overline{W}$,  with~$\ev[\widetilde{W}] = \mathbf{0}_{nT,1}$, and~$\var[\widetilde{W}] = \var[W]$. Then, the deterministic sequence~$\overline{W}$ can be directly included in the dynamics~\eqref{eq:robust_dyn}, and handled by the nominal tube. \\
\label{rmk:nonzeromean}
\end{remark}

In order to compute the stochastic error tube~$\{ \E_{k} \}_{k=1}^{T}$, the aim is to ``bound" at each time-step~$k$ the marginal variance~$\{ \var[e_k] \}_{k=1}^T$, corresponding to the~$n$-dimensional block diagonal entries of~$\var[E] \in \mathbb{R}^{nT \times nT}$, i.e. the variance of the sequence~$E = [e_1^\top,\dots,e_T^\top]^\top$ defined as 
\begin{align*}
 \var[E] & =  \overline{\Acl}(\theta) \var[W] \overline{\Acl}(\theta)^\top,  \\
    \overline{\Acl}(\theta) & = \begin{bmatrix}
    I_n              & \mathbf{0}_{n,n}               & \dots & \mathbf{0}_{n,n}  \\
    \Acl(\theta)       & I_n               & \dots & \mathbf{0}_{n,n}\\
    \Acl(\theta)^2     & \Acl(\theta)       & \dots & \mathbf{0}_{n,n}\\
    \vdots          & \vdots          & \ddots& \vdots\\
    \Acl(\theta)^{N-1} & \Acl(\theta)^{N-2} &  \dots     & I_n
    \end{bmatrix}.
\end{align*}
Note that for each~$\theta \in \Theta$,~$\var[E]$ is a well-defined covariance matrix. By exploiting its geometric and statistical interpretation, we formalize a bound in terms of the Loewner order, which minimizes the spread in the direction of the principal components~\cite{KL34},\cite{SB04}. The associated optimization problem can be formulated as \\ \\ 
$(\overline\var[e_k])^{-1} =$ 
\begin{argmini!}[0]{X^{-1}}{ \! \! \! \! -\log \det X^{-1}}{\label{eq:bounding_variance}}{}
\addConstraint{\! \! \! \! X - \var[e_k] \succeq 0,}{}{ \quad \forall \theta \in \Theta}{}
\end{argmini!}
to be solved for~$k \in [1\!:\!T]$. In the following subsections, we describe reformulations of problem~\eqref{eq:bounding_variance} determining the bounding sequence~$\{ \overline\var[e_k] \}_{k=1}^T$, both for i.i.d. and correlated noise sequences.
 
\subsubsection{I.i.d. noise sequences}
\label{subsubsec:I.i.d. noise sequences}

We consider the particular case
\begin{equation*}
    \var[W] = \begin{bmatrix}
\Sigma_w & \mathbf{0}_{n,n} & \dots & \mathbf{0}_{n,n} \\
\mathbf{0}_{n,n} &  \Sigma_w & \dots & \mathbf{0}_{n,n} \\
\vdots          & \vdots          & \ddots& \vdots\\
\mathbf{0}_{n,n} & \mathbf{0}_{n,n} &   \dots    & \Sigma_w
\end{bmatrix},
\end{equation*}
i.e. the only non-zero entries of~$ \var[W]$ are its identical block-diagonal entries~$\Sigma_w \in \mathbb{R}^{n \times n}$. This means that the marginal variances can be iteratively computed for~$k=1,\dots,T$ as
\begin{equation}
    \var[e_{k+1}] = \Acl(\theta)\var[e_{k}] \Acl(\theta)^\top + \Sigma_w, \; \forall \theta \in \Theta. \label{eq:var_prop}
\end{equation}
Since~$e^{\text{true}}_0$ is known, we can directly infer that~$\overline\var[e_1] = \Sigma_w$. For~$k=2,\dots,T$, we can use~\eqref{eq:var_prop} in problem~\eqref{eq:bounding_variance}, to obtain \\ \\
$(\overline\var[e_{k+1}])^{-1} =$
\begin{argmini}[0]{X^{-1}}{ \! \! \! \! -\log \det X^{-1}}{\label{eq:iid_bounding_variance}}{}
\addConstraint{\! \! \! \! X - \Acl(\theta) \overline\var[e_{k}] \Acl(\theta)^\top - \Sigma_w }{}{\succeq 0}{}
\addConstraint{\! \! \! \! \forall \theta \in \Theta}{}{}{}
\end{argmini}
where we iteratively use the solution at time-step~$k$ to obtain the solution at~$k+1$. Since by construction~$\Sigma_w \succ 0$, and~$\overline\var[e_{k}] \succ 0, \forall k > 0$, problem~\eqref{eq:iid_bounding_variance} admits a convex reformulation following Lemma~\ref{prop:convex_reformulation} in Appendix~\ref{subsec:useful_lemmas} as
\\ \\
$(\overline\var[e_{k+1}])^{-1} =$
\begin{argmini}[0]{X^{-1}}{ \! \! \! \! -\log \det X^{-1}}{\label{eq:iid_bounding_variance_convex}}{}
\addConstraint{\! \! \! \! \begin{bmatrix}
X^{-1}               & X^{-1}\Sigma_w & X^{-1}\Acl(\theta^j)          \\
\Sigma_w X^{-1}      & \Sigma_w       & 0                          \\
\Acl(\theta^j)^\top X^{-1} & 0              & (\overline\var[e_{k}])^{-1}
\end{bmatrix} }{}{\succeq 0}{}
\addConstraint{\! \! \! \! \forall j \in \{ 1,\dots,v_2 \}, }{}{}{}
\end{argmini} 
where~$\{\theta^j\}_{j=1}^{v_2}$ are the vertices of~$\Theta$.
\subsubsection{Correlated noise sequences}
\label{subsubsec:Correlated noise sequences}
For correlated noise sequences with full covariance matrix~$\var[W]$, we cannot sequentially compute the bounding matrix sequence as in the i.i.d. case. The marginal variance~$\var[e_k]$ depends on all previous time-steps, and therefore contains a series of nonlinear terms, i.e. powers of~$\Acl(\theta)$
\begin{align}
    & \var[e_k] = \nonumber \\
    & \begin{bmatrix}
    \Acl^{k-1}(\theta)& \dots  & I_n 
    \end{bmatrix}\var[W]_{1:kn,1:kn}\begin{bmatrix}
    \Acl^{k-1}(\theta)^\top \\ \vdots \\ I_n 
    \end{bmatrix},
    \label{eq:var_with_powers}
\end{align}
which cause problem~\eqref{eq:bounding_variance} to be intractable. In the following, we propose a procedure summarized in Algorithm~\ref{alg:procedure} where~\eqref{eq:bounding_variance} is broken down into a sequence of tractable sub-problems that admit convex reformulations similar to problem~\eqref{eq:iid_bounding_variance_convex}. As for the i.i.d. case,~$\overline\var[e_1]$ is initialized to~$\var[W]_{1:n,1:n}$. For each time-step~$k\in[2\!:\!T]$, the idea is to sequentially factorize the matrix product~\eqref{eq:var_with_powers} such that at each iteration~$i\in\{k,\dots,2\}$ two sub-problems, defined as ``inner bound" and ``outer bound" problems, provide tractable intermediate solutions~$\bar{Y}(i), \; i = k, \dots, 1$, with~$\bar{Y}(1)$ corresponding to the final bound~$\overline\var[e_k]$. Further details regarding these sub-problems are provided in the rest of this section.
\begin{algorithm}
\caption{Marginal variance bound for correlated noise}\label{alg:procedure}
\begin{algorithmic}
\Require $\Acl(\theta), \; \var[W], \; \Theta$
\State $\overline\var[e_1] = \var[W]_{1:n,1:n}$
\For{$k \in \{2,\dots,T\}$}
    \State $\bar{Y}(k) = \var[W]_{1:kn,1:kn}$ 
    \For{$i \in \{k,\dots,2\}$}
        \State $D_1 =$ \texttt{ib}$(\Acl(\theta), \bar{Y}(i)_{1:2n,1:2n})$ (see~\eqref{eq:inner_matrix_bound_convex})
        \State $Y(\theta) = Y(D_1, \Acl(\theta), \bar{Y}(i))$
    \If{ $Y(\theta) \neq D_1$}
        \State $\bar{Y}(i-1) =$ \texttt{ob}$(Y(\theta))$ (see~\eqref{eq:outer_matrix_bound})
    \ElsIf{$Y(\theta) == D_1$}
        \State $\bar{Y}(i-1) = D_1$  
    \EndIf 
    \EndFor
    \State \textbf{return} $\overline\var[e_k] = \bar{Y}(1)$
\EndFor 
\State \textbf{return} $\begin{bmatrix}
\overline\var[e_1] & \dots & \overline\var[e_T]  
\end{bmatrix}$
\end{algorithmic}
\end{algorithm}

First, note that the following relation holds 
\begin{equation*}
\begin{bmatrix} \Acl^{k-1}(\theta)& \Acl^{k-2}(\theta) \; \dots \end{bmatrix} = \begin{bmatrix} \Acl^{k-2}(\theta)\begin{bmatrix}\Acl(\theta) & I_n\end{bmatrix} \; \dots \end{bmatrix}, 
\end{equation*}
and that by setting~$A_{I}(\theta)= \begin{bmatrix}\Acl(\theta) & I_n\end{bmatrix}$, we obtain the reformulation in~\eqref{eq:floatingeq} that determines the first sub-problem at iteration~$i=k$, i.e. the ``inner bound" problem. This can be compactly written as a function~$D_1 =$ \texttt{ib}$(\Acl(\theta), \bar{Y}(k)_{1:2n,1:2n})$, and is defined as 
%
\addtocounter{equation}{1}%
\setcounter{storeeqcounter_one}%
{\value{equation}}%
\begin{figure*}[t!]
\normalsize
\setcounter{tempeqcounter}{\value{equation}} 
\begin{IEEEeqnarray}{rCl}
\setcounter{equation}{\value{storeeqcounter_one}} 
    && \var[e_k] = 
    \begin{bmatrix}
   \Acl^{k-2}(\theta)A_{I}(\theta) & \dots & I_n
    \end{bmatrix}\bar{Y}(k)_{1:kn,1:kn}\begin{bmatrix}
    A_{I}(\theta)^\top \Acl^{k-2}(\theta)^\top \\ \vdots \\ I_n 
    \end{bmatrix} = \nonumber \\
    && \begin{bmatrix}
   \Acl^{k-2}(\theta) & \dots & I_n
    \end{bmatrix}
    \begin{bmatrix} A_{I}(\theta) \bar{Y}(k)_{1:2n,1:2n} A_{I}(\theta)^\top & A_{I}(\theta)  \bar{Y}(k)_{1:2n,2n+1:kn} \\   \bar{Y}(k)_{2n+1:kn,1:2n}A_{I}(\theta)^\top & \bar{Y}(k)_{2n+1:kn,2n+1:kn} \end{bmatrix}
    \begin{bmatrix}
   \Acl^{k-2}(\theta)^\top \\ \vdots \\ I_n
    \end{bmatrix} \preceq \nonumber \\
    && \begin{bmatrix}
   \Acl^{k-2}(\theta) & \dots & I_n
    \end{bmatrix}
    \underbrace{\begin{bmatrix} D_1 & A_{I}(\theta)  \bar{Y}(k)_{1:2n,2n+1:kn} \\   \bar{Y}(k)_{2n+1:kn,1:2n}A_{I}(\theta)^\top & \bar{Y}(k)_{2n+1:kn,2n+1:kn} \end{bmatrix}}_{ = Y(D_1, \Acl(\theta), \bar{Y}(k))}
    \begin{bmatrix}
   \Acl^{k-2}(\theta)^\top \\ \vdots \\ I_n
    \end{bmatrix}.
\label{eq:floatingeq}
\end{IEEEeqnarray}
\setcounter{equation}{\value{tempeqcounter}} 
\hrulefill
\vspace*{4pt}
\end{figure*}
\begin{mini}[0]{D_1^{-1}}{ \! \! \! \! -\log \det D_1^{-1}}{\label{eq:inner_matrix_bound}}{}
\addConstraint{\! \! \! \! D_1 - A_{I}(\theta) \bar{Y}(k)_{1:2n,1:2n} A_{I}(\theta)^\top }{}{\succeq 0, \; \forall \theta \in \Theta.}{}
\end{mini}

Sub-problem~\eqref{eq:inner_matrix_bound} admits a convex reformulation, provided that the following matrix
\begin{align*}
 \widetilde{X}(\theta) & = \Acl(\theta) \bar{Y}(k)_{1:n,n+1:2n} \\
& + \bar{Y}(k)_{n+1:2n,1:n} \Acl(\theta)^\top + \bar{Y}(k)_{n+1:2n,n+1:2n}.
\end{align*}
is positive definite. Then, Lemma~\ref{prop:convex_reformulation} in Appendix~\ref{subsec:useful_lemmas} can be applied, and \eqref{eq:inner_matrix_bound} becomes
\begin{mini}[1]{D_1^{-1}}{ \! \! \! \! -\log \det D_1^{-1}}{\label{eq:inner_matrix_bound_convex}}{}
\addConstraint{\! \! \! \! \begin{bmatrix}
 D_1^{-1}               & D_1^{-1}\widetilde{X}(\theta^j) & D_1^{-1} \Acl(\theta^j)          \\
\widetilde{X}(\theta^j) D_1^{-1}      & \widetilde{X}(\theta^j)      & 0                          \\
\Acl(\theta^j)^\top D_1^{-1} & 0              & (\bar{Y}(k)_{1:n,1:n})^{-1}
\end{bmatrix} }{}{\succeq 0}{}
\addConstraint{\! \! \! \! \forall j \in \{ 1,\dots,v_2 \}. }{}{}{}
\end{mini}

\begin{remark}{\textbf{Positive definitiveness requirement}} \\
Satisfaction of the requirement~$\widetilde{X}(\theta^j) \succ 0$ for each~$k \geq 2, \; j \in \{ 1,\dots,v_2 \}$, needed for applying Lemma~\ref{prop:convex_reformulation}, will typically depend on how strong correlations are in the noise sequence~$W$ that affects the evolution of the system dynamics. Alternatively, one can compute a positive definite upperbound for~$\widetilde{X}(\theta^j)$, which may ultimately generate more conservative~$k$-step RPRS. Note that a similar condition can be found in~\cite{MK21}, referred to as correlation bound. \\
\label{rmk:positive_definite}
\end{remark}

Using the bound from problem~\eqref{eq:inner_matrix_bound_convex}, we construct the matrix~$Y(D_1, \Acl(\theta), \bar{Y}(k))$ in~\eqref{eq:floatingeq}, which we compactly refer to as~$Y(\theta)$, that now depends affinely on the parameter~$\theta$. Therefore, this can be again bounded as
\begin{mini*}[0]{\bar{Y}^{-1}(k \! - \! 1)}{ \! \! \! \! -\log \det \bar{Y}^{-1}(k \! - \! 1)}{\label{eq:bounding_variance}}{}
\addConstraint{\! \! \! \! \bar{Y}(k \! - \! 1) - Y(\theta) \succeq 0,}{}{ \quad \forall \theta \in \Theta}{}
\end{mini*}
which by pre- and post-multiplying by~$\bar{Y}^{-1}(k \! - \! 1)$, and then making use of the Schur complement and Lemma~\ref{lemma:vertex_property} is equivalent to
\begin{mini}[0]{\bar{Y}^{-1}(k \! - \! 1)}{ \! \! \! \! -\log \det \bar{Y}^{-1}(k-1)}{\label{eq:outer_matrix_bound}}{}
\addConstraint{\! \! \! \! \begin{bmatrix}
    \bar{Y}^{-1}(k \! - \! 1)              & \bar{Y}^{-1}(k \! - \! 1)Y(\theta^j) \\
    Y(\theta^j) \bar{Y}^{-1}(k \! - \! 1)     & Y(\theta^j) 
    \end{bmatrix} \succ 0}{}{ }{}
\addConstraint{\! \! \! \! \forall j \in \{ 1,\dots,v_2 \} .}{}{}{}
\end{mini}
We refer to~\eqref{eq:outer_matrix_bound} as the ``outer bound" problem, which can be written as a function~$\bar{Y}(k \! - \! 1) =$\texttt{ob}$(Y(\theta))$. Matrix~$\bar{Y}(k \! - \! 1)$ can now be used to proceed with the recursion, i.e. we again isolate a block~$A_I(\theta)$ from~$\begin{bmatrix} \Acl^{k-2}(\theta) & \dots & I_n \end{bmatrix}$, and solve the associated problems~\eqref{eq:inner_matrix_bound_convex} and~\eqref{eq:outer_matrix_bound} to obtain~$\bar{Y}(k \! - \! 2)$. This factorization is repeated until all nonlinearities have been addressed, i.e. until we reach the base step~$\bar{Y}(1)$ that provides a bound for~$\var[e_k]$.

\begin{remark}{\textbf{Scalability of Algorithm \ref{alg:procedure}}} \\
While all~$k$-step RPRS are pre-computed offline, and therefore do not increase complexity of the associated control problem, the procedure outlined in Algorithm~\ref{alg:procedure} can become computationally expensive for high dimensional systems, and for long noise sequences, as it requires to solve~$2(k \! - \! 2) + 1$ semi-definite programs for each time-step~$k \geq 2$. One way to improve scalability is to replace the linear matrix inequalities with diagonal dominance constraints, i.e. a sufficient condition that allows for reformulating all optimization problems involved as linear programs (see e.g. Theorem 6.1.10 in~\cite{RH12}).
\end{remark}

\subsection{Variance-based~$k$-step RPRS}
\label{subsubsec:kstepprs}

Once the sequence~$[\overline\var[e_1], \dots, \overline\var[e_T] ]$ is available, the uncertainty of~$e_k$ at each time-step~$k$ is fully specified for all~$\theta \in \Theta$. We can then construct different types of confidence regions based on Chebychev's bound: one option is to generate ellipsoidal~$k$-step RPRS as
\begin{equation*}
    \E^{\text{ell}}_k = \{ e \; | \; e^\top (\overline\var[e_k])^{-1} e \leq \tilde{p} \},
\end{equation*}
where~$\tilde{p} = \frac{n}{1 - p}$ with~$p$ being the probability level, and~$n$ is the dimension of~$e_k$. If the distribution of the error sequence is Gaussian, then we can set~$\tilde{p} = \chi^2_n(p)$, i.e. the quantile function of the chi-squared distribution with~$n$ degrees of freedom. Alternatively, one can consider half-spaces:
\begin{equation*}
    \E^{\text{hs}}_k = \{ e \; | \; h^\top e \leq \sqrt{\tilde{p}h^\top\overline\var[e_k]h }\},
\end{equation*}
which is a~$k$-step RPRS of probabilty level~$p$ with~$\tilde{p} = \frac{1}{1 - p}$, or~$\tilde{p} = \chi^2_1(2p - 1)$ for Gaussian distributions (further details can be found in~\cite{LH20},\cite{LH20b}).

\subsection{Chance constraint reformulation}
\label{subsec:chance constraints reformulation}

Using the stochastic error tube, we can now define a time-varying state constraint tightening~$\mathcal{Z}_{k} = \mathcal{X} \ominus \E_k$, and input constraint tightening~$\mathcal{V}_{k} = \mathcal{U} \ominus \E^u_k$. A~$k$-step RPRS~$\E^u_k$ for the input can be easily obtained based on the variance propagation of~$e^u_k = Ke_k$, and re-using the computations from the procedure outlined in section~\ref{subsec:Stochastic error tube} such that~$\overline\var[e^u_k] = K \overline\var[e_k] K^\top$ for each time-step~$k$. 

Since both sets~$\mathcal{X}$ and~$\mathcal{U}$ are polytopes, a convenient choice is to construct polytopic~$k$-step RPRS such that both~$\mathcal{Z}_{k}$ and~$\mathcal{V}_{k}$ are polytopes as well, and therefore containment conditions~\eqref{eq:10e} and~\eqref{eq:10f} can be reformulated with respect to~$s_{i|k}, \; \alpha_{i|k}$.
Starting from the options provided in subsection~\ref{subsubsec:kstepprs}, there are two possibilities: one, is to first generate ellipsoidal sets~$\E^{\text{ell}}_k$ and~$\E^{\text{ell},u}_k$, and find the smallest polytope containing the sets by solving an associated semi-definite program. Alternatively, a cheaper but more conservative approach is to make use of Boole's inequality for defining the single half-spaces~$\E^{\text{hs}}_k$ and~$\E^{\text{hs},u}_k$ determining the overall polytopic~$k$-step RPRS (see e.g.~\cite{LH20b}). Either option provides tightened sets of the form~$\mathcal{Z}_{k} = \{ z \; | \; Fz \leq \mathbf{1}_{n_x,1} - f_k \}$, and~$\mathcal{V}_{k} = \{ u \; | \; Gu \leq \mathbf{1}_{n_u,1} - g_k \}$, which enable the following state and input containment conditions
\begin{align*}
& \Z_{i|k} \subseteq \mathcal{Z}_{k + i} \Leftrightarrow  F s_{i|k} \leq \mathbf{1}_{n_x,1} - f_k - \alpha_{i|k} \max_{\bar{z} \in \bar\Z} F\bar{z}\\
& K\Z_{i|k} \oplus v_{i|k} \subseteq \mathcal{V}_{k+i} \Leftrightarrow  \\
& G( Ks_{i|k} + v_{i|k})\leq \mathbf{1}_{n_u,1} - g_k - \alpha_{i|k} \max_{\bar{z} \in \bar\Z} GK\bar{z},
\end{align*}
where we define~$\bar{f} =  \max_{\bar{z} \in \bar\Z} F\bar{z}$, and~$\bar{g} =  \max_{\bar{z} \in \bar\Z} GK\bar{z}$.
\begin{remark}{\textbf{Non-conservative constraint tightening}} \\
\label{rmk:non-conservative-tightening}
In case either the state constraints are aligned with the nominal tube~$\Z_{i|k}$, or the input constraints are aligned with~$K\Z_{i|k}$, due to the particular choice of the base set~$\bar\Z$, one can construct - for either state or input - a non-conservative tightening for each half-space independently (see e.g.~\cite{LH20}). 
\end{remark}

\subsection{Final problem}
\label{subsec:RS-MPC problem using indirect feedback}

Before stating the final MPC problem, we provide conditions for an appropriate terminal set design analogoulsy to~\cite{ML20,SR13}.
\begin{assumption}\textit{(Terminal set for nominal tube)} \\
    There exists a non-empty terminal set~$\mathcal{Z}_f = \{ (s,\alpha) \in \mathbb{R}^{n+1} \; | \; H_T s + h_T \alpha \leq \mathbf{1}_{n_f,1} \}$, with~$\mathcal{Z}_f \subseteq \mathcal{Z}_{\infty} = \bigcap_{k=1}^{N} \mathcal{Z}_{k}$, that is robust positively invariant for the set dynamics~\eqref{eq:tube containment} under the zero terminal control law contained in$ \mathcal{V}_{\infty} = \bigcap_{k=1}^{N} \mathcal{V}_{k}$, i.e. we have~$\forall \theta \in \Theta$ 
    \begin{align*}
        (s,\alpha) \in \mathcal{Z}_f & \Rightarrow \exists (s^+ , \alpha^+) \in \mathcal{Z}_f \; \text{s.t.} \\
        & \Acl(\theta)(\{s\} \oplus \alpha \bar\Z) \subseteq \{s^+\} \oplus \alpha^+ \bar\Z
        \end{align*}
    \label{asmp:terminal set}
\end{assumption}

Introducing the components derived in sections~\ref{subsec:nominal tube},~\ref{subsec:Stochastic error tube},~\ref{subsec:chance constraints reformulation} in problem~\eqref{eq:mpc_highlevel}, and following Assumption~\ref{asmp:terminal set}, we state the overall problem to be solved in a receding horizon fashion
\begin{mini!}[1]{\mathbf{v},\mathbf{s},\boldsymbol{\alpha},\mathbf\Lambda}{ \! \! \! \! \ev_{W_k} \left[ \sum_{i=0}^{N-1} l_k(x_{i|k},u_{i|k})  + l_f(x_{N|k}) \right] \label{eq:mpc_cost_function}}{\label{eq:mpc_problem}}{}
\addConstraint{\! \! \! \! x_{i+1|k} = A(\bar\theta_k)x_{i|k} + B(\bar\theta_k)u_{i|k} + w_{i|k} \label{eq:c12}}{}{}
\addConstraint{\! \! \! \! u_{i|k} = Kx_{i|k} + v_{i|k} \label{eq:c1}}{}{}
\addConstraint{\! \! \! \! W_k = [w_{0|k}^\top, \dots, w_{N-1|k}^\top]^\top \sim Q_{W_k} \label{eq:c2}}{}{}
\addConstraint{\! \! \! \!  F s_{i|k} \leq \mathbf{1}_{n_x,1} - f_{k+i} - \alpha_{i|k} \bar{f} \label{eq:c3}}{}{}
\addConstraint{\! \! \! \!   G( Ks_{i|k} + v_{i|k})\leq \mathbf{1}_{n_u,1} - g_{k+i} - \alpha_{i|k} \bar{g} \label{eq:c4}}{}{}
\addConstraint{\! \! \! \!  H_T s_{N|k} \leq \mathbf{1}_{n_f,1} - \alpha_{N|k} h_T \label{eq:c5}}{}{}
\addConstraint{\! \! \! \! \Lambda_{i|k}^j h_\theta + H_z d_{i|k}^j - \alpha_{i+1|k} \mathbf{1}_{r,1} \leq 0, \quad j = 1, \dots, v_1  \label{eq:c6}}{}{}
\addConstraint{\! \! \! \! H_z D_{i|k}^j = \Lambda_{i|k}^j H_\theta, \quad j = 1, \dots, v_1  \label{eq:c7}}{}{}
\addConstraint{\! \! \! \! \Lambda_{i|k}^j \in \mathbb{R}_{\geq 0}^{r \times q}, \quad j = 1, \dots, v_1  \label{eq:c8}}{}{}
\addConstraint{\! \! \! \!  \alpha_{i+1|k} \geq 0 \label{eq:c9}}{}{}
\addConstraint{\! \! \! \! x_{0|k} = x^{\text{true}}_k, \; s_{0|k} = s_{1|k-1}, \; \alpha_{0|k} = \alpha_{1|k-1} \label{eq:c10}}{}{}
\end{mini!}
where we optimize over the input sequence~$\mathbf{v} = \{v_{0|k}, \dots, v_{N-1|k} \}$, and the variables determining the nominal tube~$\{\Z_{i|k}\}_{i=1}^{N}$, i.e.~$\mathbf{s} = \{s_{1|k}, \dots, s_{N|k} \}$, and~$\boldsymbol{\alpha} = \{\alpha_{1|k}, \dots, \alpha_{N|k} \}$. Additionally, we optimize over the dual variables~$\mathbf\Lambda = \{\Lambda^j_{0|k}, \dots, \Lambda^j_{N-1|k} \}_{j=1}^{v_1}$ needed for nominal tube containment~\eqref{eq:dual_reformulation}, which is expressed by constraints~\eqref{eq:c6},~\eqref{eq:c7}, and~\eqref{eq:c8}. The cost function expectation~\eqref{eq:mpc_cost_function} is taken with respect to a predicted noise sequence~$W_k$ whose distribution~$Q_{W_k}$ is defined by the conditional distribution~$p([w_{k}^\top, \dots, w_{k+N-1}^\top | w_{0}^\top, \dots, w_{k-1}^\top]^\top)$. The sequences~$\mathbf{s}$ and~$\boldsymbol{\alpha}$ are constrained to lie within the tightened state and input constraints~\eqref{eq:c3}, \eqref{eq:c4}, and at each time-step the first element of the sequence is initialized at the shifted solution from the previously optimized predicted trajectory~\eqref{eq:c10}. This ensures that initial containment of the true unknown nominal state~$z^{\text{true}}_k$ in~\eqref{eq:10j} is always guaranteed by construction. The measured state~$x^{\text{true}}_k$ only enters constraints~\eqref{eq:c12} and \eqref{eq:c1}, which are the predicted state and input sequences computed with the current parameter estimate~$\bar\theta_k$ at time~$k$ used to evaluate the cost function along the horizon, and therefore introduce \textit{indirect} feedback in the MPC optimization problem~\cite{LH20}. Finally, condition~\eqref{eq:c5} expresses containment in the terminal set satisfying Assumption~\ref{asmp:terminal set}.

Note that the computational complexity of the proposed formulation is similar to~\cite{ML20} since the RPRS computations are all carried out offline. While the choice of homothetic tubes increases the number of optimization variables with respect to~\cite{LH20}, the latter cannot handle model mismatch, and therefore cannot guarantee constraints to be fulfilled within the prescribed probability level unless the system model is perfectly known.

\section{Analysis of closed-loop properties}
\label{sec:RS-MPC Analysis}

The theorems presented in this section establish recursive feasibility of the control scheme based on~\eqref{eq:mpc_problem}, and closed-loop chance constraint satisfaction of the true unknown system thanks to a combined use of homothetic tubes for handling parametric uncertainty and of indirect feedback. Furthermore, we derive an average asymptotic performance bound on the~$l2$-norm of the state in the case of quadratic cost functions and i.i.d. noise sequences.

\subsection{Recursive feasibility and closed-loop properties}

\begin{theorem} \textit{(Recursive feasibility and closed-loop chance constraint satisfaction)} \\
Consider system~\eqref{eq:system} under the control law~\eqref{eq:feedback_law} using the optimal input sequence~$\mathbf{v}^*$ resulting from~\eqref{eq:mpc_problem}. If Assumptions~\ref{asmp:feedback_law} and~\ref{asmp:terminal set} hold,~$\thetatrue \in \Theta$, and the optimization problem~\eqref{eq:mpc_problem} is feasible for~$x^{\text{true}}_0 = z^{\text{true}}_0 = s_{0|0}$ and~$\alpha_{0|0} = 0$, then:
\begin{enumerate}
    \item[(i)] Problem~\eqref{eq:mpc_problem} is recursively feasible. 
    \item[(ii)] The true state~$x^{\text{true}}_k$ and input~$u^{\text{true}}_k$ satisfy the closed-loop chance constraints~\eqref{eq:chance constraints}.
\end{enumerate}
\end{theorem}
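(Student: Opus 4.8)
The plan is to treat the two claims with a shared construction. For part (i) I would run the standard receding-horizon argument: from a feasible optimizer at time $k$ I construct an explicit candidate solution at time $k+1$ by shifting and appending the terminal ingredients, and then verify each constraint of~\eqref{eq:mpc_problem}. For part (ii) I would first prove a purely realization-wise containment of the true nominal state $z^{\text{true}}_k$ in the first tube set $\Z_{0|k}$, and only then invoke the RPRS property~\eqref{eq:PRS_robust} to bound the error in probability. The decisive structural feature I would exploit is that the split dynamics~\eqref{eq:split_dynamics_true} decouple control and noise: the error recursion~\eqref{eq:stochastic_dyn_true} is input-free, so $e^{\text{true}}_k=\sum_{j=0}^{k-1}\Acl(\thetatrue)^{k-1-j} w_j$ and its law is fixed by $Q_W$ and $\thetatrue$ alone, independently of the applied $v$.

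For recursive feasibility, let $\{v^*_{i|k}, s^*_{i|k}, \alpha^*_{i|k}, \Lambda^{*,j}_{i|k}\}$ denote the optimizer at time $k$. I would define the candidate at $k+1$ by the shifts $v_{i|k+1}=v^*_{i+1|k}$, $s_{i|k+1}=s^*_{i+1|k}$, $\alpha_{i|k+1}=\alpha^*_{i+1|k}$ over the overlapping indices, shifting the duals $\Lambda$ likewise. For the terminal block I set the last free input to zero and use Assumption~\ref{asmp:terminal set} to select $(s_{N|k+1},\alpha_{N|k+1})=(s^+,\alpha^+)\in\mathcal{Z}_f$ certifying $\Acl(\theta)(\{s^*_{N|k}\}\oplus\alpha^*_{N|k}\bar{\Z})\subseteq\{s^+\}\oplus\alpha^+\bar{\Z}$ for all $\theta\in\Theta$; the matching dual certificate exists by strong LP duality, so~\eqref{eq:c6}--\eqref{eq:c9} hold on the terminal step. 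Checking the remaining constraints is then routine: the tightened inequalities~\eqref{eq:c3},~\eqref{eq:c4} on the shifted indices hold because the time shift aligns $f_{k+i+1}=f_{(k+1)+i}$ (and likewise $g$) with the previously satisfied constraints, while the terminal point lies in $\mathcal{Z}_f\subseteq\mathcal{Z}_\infty$ and, under the zero control law, in $\mathcal{V}_\infty$; finally the initialization~\eqref{eq:c10} is met because $s_{0|k+1}=s^*_{1|k}$, $\alpha_{0|k+1}=\alpha^*_{1|k}$ is exactly the required shift and $x_{0|k+1}=x^{\text{true}}_{k+1}$ is always available from measurement.

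The central step bridging the two parts is to show that $z^{\text{true}}_k\in\Z_{0|k}$ holds along \emph{every} noise realization. The base case is immediate since $z^{\text{true}}_0=s_{0|0}$ and $\alpha_{0|0}=0$ give $\Z_{0|0}=\{s_{0|0}\}$. For the induction, assume $z^{\text{true}}_k\in\Z_{0|k}$ and apply $v^*_{0|k}$; the robust tube-containment constraint~\eqref{eq:tube containment} at $i=0$, valid for all $\theta\in\Theta$ and in particular for $\thetatrue\in\Theta$, yields $z^{\text{true}}_{k+1}=\Acl(\thetatrue)z^{\text{true}}_k+B(\thetatrue)v^*_{0|k}\in\Z_{1|k}$. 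Because the initialization~\eqref{eq:c10} pins $\Z_{0|k+1}=\Z_{1|k}$ regardless of the re-solved optimizer, we obtain $z^{\text{true}}_{k+1}\in\Z_{0|k+1}$. I would emphasize that this containment is deterministic precisely because the first tube set is fixed by the previous solution rather than by the measured state, which is the indirect-feedback mechanism that keeps the nominal state in the tube irrespective of the noise.

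For part (ii) I combine the two ingredients. From~\eqref{eq:c3},~\eqref{eq:c4} the first tube set satisfies $\Z_{0|k}\subseteq\mathcal{X}\ominus\E_k$ and $K\Z_{0|k}\oplus v^*_{0|k}\subseteq\mathcal{U}\ominus\E^u_k$; together with $z^{\text{true}}_k\in\Z_{0|k}$, the event $\{e^{\text{true}}_k\in\E_k\}$ implies $x^{\text{true}}_k=z^{\text{true}}_k+e^{\text{true}}_k\in\mathcal{X}$, and $\{Ke^{\text{true}}_k\in\E^u_k\}$ implies $u^{\text{true}}_k=(Kz^{\text{true}}_k+v^*_{0|k})+Ke^{\text{true}}_k\in\mathcal{U}$. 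Instantiating~\eqref{eq:PRS_robust} at $\thetatrue$ with the probability levels $p_x$ and $p_u$ then gives $\text{Pr}(e^{\text{true}}_k\in\E_k\mid e^{\text{true}}_0)\geq p_x$ and the analogous input bound, and monotonicity of probability delivers~\eqref{eq:chance constraints}. The point demanding the most care --- and the main obstacle --- is to justify that the conditioning in~\eqref{eq:PRS_robust} transfers to the closed loop even though $v^*_{0|k}$ depends on the noise through the cost: one must argue that lower-bounding $\text{Pr}(x^{\text{true}}_k\in\mathcal{X})$ by $\text{Pr}(e^{\text{true}}_k\in\E_k)$ requires no correlation correction. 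This is secured exactly by the input-free error recursion~\eqref{eq:stochastic_dyn_true} together with the \emph{deterministic} nature of the nominal containment, so that the constraint-satisfaction event is implied by a fixed-law error event for every realization of the control.
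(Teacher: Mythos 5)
Your proposal is correct and follows essentially the same route as the paper: a shifted candidate solution closed by the terminal set assumption for recursive feasibility, and the combination of deterministic containment of $z^{\text{true}}_k$ in $\Z_{0|k}$ (via the initialization constraint~\eqref{eq:c10} and robust tube containment evaluated at $\thetatrue$) with the RPRS property~\eqref{eq:PRS_robust} for closed-loop chance constraint satisfaction. Your explicit induction for the nominal containment and the remark on the input-free error recursion merely spell out steps the paper's proof states tersely.
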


\begin{proof} 
\begin{enumerate}
    \item[(i)] Let~$ \mathbf{v}^* = \{ v^*_{0|k}, \dots, v^*_{N-1|k} \}$ be the optimal solution of optimization problem~\eqref{eq:mpc_problem} at time-step~$k$, with~$ \mathbf{s}^* = \{s^*_{0|k}, \dots, s^*_{N|k}\}~$ and~$ \boldsymbol{\alpha}^* = \{ \alpha^*_{0|k}, \dots, \alpha^*_{N|k} \}$ satisfying stage constraints~\eqref{eq:c3}, \eqref{eq:c4} and terminal condition~\eqref{eq:c5}. This means that we can construct the following admissible nominal tube~$\{s^*_{i|k}\} \oplus \alpha^*_{i|k} \bar{\Z}$,~$i=0,\dots,N$. The goal is to find a candidate solution~$\tilde{\mathbf{v}} = \{ \tilde{v}_{0|k+1}, \dots, \tilde{v}_{N-1|k+1} \}$ which similarly satisfies stage and terminal constraints for the next time-step~$k$~$+$~$1$. We choose the following candidate solution by shifting~$\mathbf{v}^*$, and applying the terminal admissible control input~$\tilde{v}_{N-1|k+1} = 0$, obtaining~$\tilde{\mathbf{v}} = \{ v^*_{1|k}, \dots, v^*_{N-1|k}, 0 \}$. Then, the resulting candidate nominal tube at time-step~$k+1$ is admissible since the first~$N-1$ steps are the shifted solution~$ \{s_{i|k}^*\} \oplus \alpha_{i|k}^* \bar{\Z}$,~$i=1,\dots,N$, and the~$N$-th step~$\Acl(\{s_{N|k}^*\} \oplus \alpha_{N|k}^* \bar{\Z})$ satisfies constraint~\eqref{eq:c5} due to Assumption~\ref{asmp:terminal set}.
    \item[(ii)] Due to the probabilistic containment condition ensured by the sets in the stochastic error tube, we have that~$\text{Pr}(e_k \in \E_k \; | \; e_0^{\text{true}}) \geq p_x, \; \forall \theta \in \Theta, \; \forall k=1,\dots,T$, and therefore this condition holds also for the true error state~$e_k^{\text{true}}$ evolving with respect to the unknown true parameter~$\thetatrue$. Then, due to feasibility of problem~\eqref{eq:mpc_problem}, the nominal state tube~$\Z_{0|k}$ contains the true nominal state~$z^{\text{true}}_k$, i.e.~$z_k^{\text{true}} \in \Z_{0|k} \subseteq \mathcal{X} \ominus \E_k$, and therefore the true state~$x^{\text{true}}_k = e_k^{\text{true}} + z_k^{\text{true}}$ satisfies~$\text{Pr}(x^{\text{true}}_k \in \mathcal{X} \; | \; x^{\text{true}}_0) \geq p_x$. The same result can be derived for the input. 
\end{enumerate}
 \end{proof}
\subsection{Average asymptotic cost bound}

We now consider the particular case in which the cost function~\eqref{eq:mpc_cost_function} is quadratic, i.e. 
\begin{subequations}
\begin{align}
    & l_k(x,u) = ||x||^2_Q + || u ||^2_R, \\
    & l_f(x) = ||x||^2_P,
\end{align}
\label{eq:quadratic_cost_function}
\end{subequations}
where~$Q \succeq 0$,~$R \succ 0$, and~$P$ satisfies the following condition
\begin{equation}
    \Acl(\theta)^\top P \Acl(\theta) - P \preceq - Q - K^\top R K, \quad  \forall \theta \in \Theta. 
    \label{eq:iid_P}
\end{equation}

Furthermore, we assume that the system is affected by zero-mean i.i.d. noise sequences, i.e.~$\ev[w_k] = 0, \var[w_k] = \Sigma_w, \; \forall k \geq 0$, and therefore the expected value of the overall objective can be explicitly computed in closed-form. This enables an analysis of the closed-loop true state~$x_k^{\text{true}}$ in terms of its~$l2$-norm, which reflects its energy, and for which we provide an average performance asymptotic bound.
\begin{theorem}
\textit{(Average asymptotic~$l2$-norm bound)} \\
Consider system~\eqref{eq:system} subject to i.i.d. disturbances under the control law~\eqref{eq:feedback_law} resulting from problem~\eqref{eq:mpc_problem} using cost function~\eqref{eq:quadratic_cost_function},~\eqref{eq:iid_P}. There exist constants~$c_0 , c_1 \in \mathbb{R}_{>0}$ such that for given~$\epsilon_0,\epsilon_1 \in \mathbb{R}_{>0}$:
\begin{equation*}
     \lim_{T \rightarrow \infty} \frac{1}{T} \ev \left[  \sum_{k=0}^T ||  x^{\text{true}}_k ||^2_2 \right ] \leq \frac{ \frac{(1 + \epsilon_1)(1 + \frac{1}{\epsilon_0})c_1}{\mu} ||  \Delta \theta^{\text{max}} ||^2_2 + \text{tr}(\Sigma)}{\lambda_{\text{max}}(\bar{Q}) - \epsilon_0 c_0} ,
\end{equation*}
where~$\frac{1}{\mu} > \sup_{(z,Kz+v) \in \mathcal{Z}_\infty \times \mathcal{V}_\infty}|| D(z,Kz + v) ||^2_2$, with~$D(\cdot,\cdot)$ defined in~\eqref{eq:D()}, and~$\Delta \theta^{\text{max}}$ is the diameter of the set~$\Theta$. The term~$\Sigma$ is defined as the sum~$\Sigma = P \Sigma_w + \epsilon_0 \Sigma_0 + (1 + \frac{1}{\epsilon_0})\Sigma_1$, where~$P\Sigma_w$ is the cost incurred under no model mismatch. Conversely,~$\Sigma_0,\Sigma_1 \succeq 0$ arise due to parametric uncertainty, and are functions of the variance matrix~$\Sigma_w$. Finally,~$\lambda_{\text{max}}(\bar{Q})$ denotes the maximum eigenvalue of~$\bar{Q} = Q + KRK^T$, and~$\epsilon_0$ is chosen such that~$\lambda_{\text{max}}(\bar{Q}) - \epsilon_0 c_0 > 0$, while~$\epsilon_1$ can be chosen to be arbitrarily small.
\label{thm:cost_bound_in_expectation}
\end{theorem}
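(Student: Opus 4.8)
The plan is to use the optimal value of the receding-horizon problem~\eqref{eq:mpc_problem} as a stochastic Lyapunov function, to establish a one-step decrease in conditional expectation, and then to telescope and divide by $T$. Write $J^*_k$ for the optimal cost at time~$k$. Since the quadratic stage and terminal costs~\eqref{eq:quadratic_cost_function} are evaluated along the trajectory initialised at the measurement $x^{\text{true}}_k$ and propagated with the estimate $\bar\theta_k$, and the predicted noise is zero-mean i.i.d.\ and independent of the decision variables, the expectation in~\eqref{eq:mpc_cost_function} splits into a deterministic nominal cost plus a constant stochastic contribution. The shifted candidate from the recursive-feasibility argument (Theorem~1), namely $\{v^*_{1|k},\dots,v^*_{N-1|k},0\}$ together with the shifted tube, is feasible at $k+1$, so its cost upper-bounds $J^*_{k+1}$; this is the quantity I would compare against $J^*_k$.

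The decisive step is to relate the realised successor state to the one-step prediction. Using~\eqref{eq:parametric_uncertainty} and the map~\eqref{eq:D()}, I would write
\begin{equation*}
x^{\text{true}}_{k+1} = \Acl(\bar\theta_k)x^{\text{true}}_k + B(\bar\theta_k)v^*_{0|k} + \delta_k + w_k,
\end{equation*}
with the parametric mismatch $\delta_k = D\!\left(x^{\text{true}}_k,\,Kx^{\text{true}}_k+v^*_{0|k}\right)(\thetatrue-\bar\theta_k)$. Exploiting linearity of $D$ and the split $x^{\text{true}}_k=z^{\text{true}}_k+e^{\text{true}}_k$, I would decompose $\delta_k$ into a nominal component $D(z^{\text{true}}_k,\,Kz^{\text{true}}_k+v^*_{0|k})(\thetatrue-\bar\theta_k)$, bounded uniformly over the constraint set through $\tfrac{1}{\mu}>\sup\|D(z,Kz+v)\|_2^2$ and $\|\thetatrue-\bar\theta_k\|_2\le\|\Delta\theta^{\text{max}}\|_2$, and a stochastic component $D(e^{\text{true}}_k,\,Ke^{\text{true}}_k)(\thetatrue-\bar\theta_k)$ whose second moment is a function of $\var[e^{\text{true}}_k]$, hence of $\Sigma_w$; the latter is the origin of the matrices $\Sigma_0,\Sigma_1$.

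Propagating this perturbation through the shifted candidate trajectory by linearity and evaluating the cost, the terminal-cost inequality~\eqref{eq:iid_P} produces the standard telescoping cancellation and leaves $-(\|x^{\text{true}}_k\|_Q^2+\|Kx^{\text{true}}_k+v^*_{0|k}\|_R^2)$ together with the contributions of $\delta_k+w_k$. The genuinely new objects are the cross products between the intended trajectory and the perturbation, which I would control with Young's inequality $\|a+b\|^2\le(1+\epsilon)\|a\|^2+(1+\tfrac{1}{\epsilon})\|b\|^2$ applied twice, once with $\epsilon_0$ to separate the state-proportional cross term and once with $\epsilon_1$ to isolate the bounded parametric part inside the remainder. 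The $\epsilon_0$-weighted cross term scales with $\|x^{\text{true}}_k\|_2^2$ and is absorbed into the stage cost, which is what leaves the effective per-step decrease coefficient $\lambda_{\max}(\bar Q)-\epsilon_0 c_0$ and forces $\lambda_{\max}(\bar Q)-\epsilon_0 c_0>0$; the bounded parametric part collects the factor $\tfrac{(1+\epsilon_1)(1+1/\epsilon_0)c_1}{\mu}\|\Delta\theta^{\text{max}}\|_2^2$, while the noise and stochastic-mismatch parts assemble into $\text{tr}(\Sigma)$ with $\Sigma=P\Sigma_w+\epsilon_0\Sigma_0+(1+1/\epsilon_0)\Sigma_1$, where $\text{tr}(P\Sigma_w)$ is the noise cost already present without mismatch. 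Taking total expectations, telescoping from $0$ to $T$, using $J^*_{T+1}\ge0$, and dividing by $T$ then yields the claimed limit $b/a$ with $a=\lambda_{\max}(\bar Q)-\epsilon_0 c_0$.

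The main obstacle I anticipate is the parametric-mismatch bookkeeping: unlike the additive noise, $\delta_k$ is correlated with the state, so its nominal component must be bounded uniformly over the feasible set (the role of $\mu$ and $c_1$) while its state-proportional contribution must stay small enough not to destroy the contraction (the role of $\epsilon_0$, $c_0$ and the constraint $\lambda_{\max}(\bar Q)-\epsilon_0 c_0>0$). A secondary point is that $\bar\theta_k$ and $\bar\theta_{k+1}$ may differ; since both lie in $\Theta$ their discrepancy is again bounded by $\|\Delta\theta^{\text{max}}\|_2$ and can be folded into the same estimates, so that all extra terms are time-independent constants and the telescoped average converges.
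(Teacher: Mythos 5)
Your overall architecture matches the paper's: value function as a stochastic Lyapunov function, shifted candidate input, Young's inequality with $\epsilon_0$ to split predicted trajectory from accumulated prediction error, the split $x^{\text{true}}=z^{\text{true}}+e^{\text{true}}$ with $\epsilon_1$, the uniform bound on $D(z,Kz+v)$ over $\mathcal{Z}_\infty\times\mathcal{V}_\infty$ via $\mu$, and telescoping. However, there is one genuine gap in how you dispose of the error-state component of the parametric mismatch. You propose to treat $D(e^{\text{true}}_k,Ke^{\text{true}}_k)(\thetatrue-\bar\theta_k)=(\Acl(\thetatrue)-\Acl(\bar\theta_k))e^{\text{true}}_k$ as a stochastic term whose second moment is folded into $\mathrm{tr}(\Sigma)$ as a per-step constant, and you identify it as ``the origin of $\Sigma_0,\Sigma_1$.'' But this term carries the $(1+\tfrac{1}{\epsilon_1})$ weight from your second Young split: if it survives as a nonzero constant, the resulting bound contains $(1+\tfrac{1}{\epsilon_1})\cdot\mathrm{const}$, which diverges as $\epsilon_1\to 0$ and contradicts the theorem's claim that $\Sigma$ is independent of $\epsilon_1$ and that $\epsilon_1$ can be taken arbitrarily small. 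The paper's proof instead shows (in the final limit computation~\eqref{eq:last_one}) that this entire contribution \emph{vanishes} in the Ces\`aro average: $\ev[e^{\text{true}}_k\,|\,x^{\text{true}}_0]=0$ so only the variance contributes, and the accumulated trace is $o(T)$, so after dividing by $T$ only the $(1+\epsilon_1)\tfrac{\|\Delta\theta^{\max}\|_2^2}{\mu}$ part survives. Without this vanishing argument you do not obtain the stated numerator.

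Two secondary points. First, in the paper $\Sigma_0$ and $\Sigma_1$ do not come from $e^{\text{true}}_k$ at all: $\Sigma_0$ collects the variances $\bar\Sigma^i_w$ of the predicted states along the horizon (the $\epsilon_0$-weighted term), and $\Sigma_1$ collects the variances of the propagated prediction errors $\delta\hat x_i$, which are driven by an explicit recursion containing the estimate change $\Acl(\bar\theta_{k+1})-\Acl(\bar\theta_k)$; your plan to ``fold'' the $\bar\theta_k\to\bar\theta_{k+1}$ discrepancy into the same estimates skips this recursion, which is where $\Sigma_1$ and part of $c_1\|\tilde x_k\|_2^2$ actually come from. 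Second, the existence of $c_0$ and $c_1$ is not automatic: the paper obtains them by arguing that the cost of the expected predicted trajectory is a continuous, piecewise quadratic function of the initial condition over the compact feasible set (citing explicit-MPC structure), hence admits a quadratic upper bound; your proposal asserts the state-proportional scaling without this justification.
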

\begin{proof}
Proof details are given in Appendix~\ref{apdx:average_asymptotic_cost_bound} .
\end{proof}
The two terms in the performance bound provide an explicit characterization of the unavoidable cost incurred due to model mismatch, and due to the presence of stochastic noise.
\begin{remark}(\textbf{Case with no model mismatch}) \\ 
The development in Appendix~\ref{apdx:average_asymptotic_cost_bound} shows that in the absence of model mismatch we recover the same expected cost decrease bound shown in~\cite{LH20}. As a consequence, we also obtain the same average asymptotic cost bound with cost matrices~\eqref{eq:quadratic_cost_function}. 
\end{remark}
\begin{remark}(\textbf{Effect of parameter learning scheme}) \\
Note that in the proof details in Appendix~\ref{apdx:average_asymptotic_cost_bound}, we construct based on the properties of~$\Theta$ a worst-case bound for the term
\begin{equation*}
    \lim_{T \rightarrow \infty} \frac{1}{T} \sum_{k=0}^T || \thetatrue - \bar\theta_k||^2_2.
\end{equation*}
In doing so, we do not leverage the properties of the learning scheme chosen to update the point estimate of~$\theta$ that can potentially provide conditions for convergence, and therefore improve the performance bound.
\end{remark}

\section{Numerical results}
\label{sec:numerical results}

\subsection{Illustrative example}
\label{subsec:illustrative example}

We first make use of an illustrative example for demonstrating the properties of the presented control approach. The model considered here is of the form
\begin{equation*}
x_{k+1} = A x_k + B(\theta) u_k + w_k,
\end{equation*}
where~$A = \begin{bmatrix}
1 & 1 \\ 0 & 1
\end{bmatrix}$. Uncertainty affects only the input matrix, i.e. we consider the case of misspecified actuator gains, where~$B = B_0 + \theta B_1$,  and~$B_0 = B_1 = \begin{bmatrix}
0.5 \\ 1 
\end{bmatrix}$. The additive stochastic disturbance affecting the system is i.i.d. Gaussian distributed as~$w_k \sim \mathcal{N}\left (0,\begin{bmatrix} 0.3 & 0.5 \\ 0.5 & 1\end{bmatrix}\right)$. We study the behavior of our proposed approach in terms of constraint violation by varying both the level of chance constraint satisfaction and the amount of model mismatch. The system is subject to state chance constraints on the second dimension~$\text{Pr}(|[x_k]_2| \leq 3) \geq p_x$, such that the probability level~$p_x$ belongs to the set~$\{ 0.85, 0.9, 0.95 \}$. Model mismatch is bounded and contained in the interval~$ \Theta_\alpha := [- \alpha, 0]$, where~$\alpha \in \{ 0.04,0.1,0.15,0.2,0.25,0.3,0.35,0.4 \}$ is a parameter which we use to vary the magnitude of the considered model mismatch.  

The associated MPC problem in~\eqref{eq:mpc_problem} is solved in a receding horizon fashion, where we choose the prediction horizon to have length~$N=30$. The cost function is chosen to be quadratic in the state and in the input as in~\eqref{eq:quadratic_cost_function}, and the weights are set as~$Q = I_2$, and~$R =1$. The objective is computed by fixing the parameter estimate to~$\bar\theta_\alpha = -\alpha$, while the unknown true parameter is~$\theta^{\text{true}} = 0$. Constraints are appropriately tightened, by constructing~$k$-step RPRS based on marginal distributions~\cite{LH20b}, and with respect to which we compute a polytopic terminal set as in~\cite{ML20}, satisfying Assumption~\ref{asmp:terminal set}. 

\begin{figure}[h!]
  \centering
        \scalebox{0.63}{
%
%
\definecolor{mycolor1}{rgb}{0.00000,0.44700,0.74100}%
\definecolor{mycolor2}{rgb}{0.85000,0.32500,0.09800}%
\begin{tikzpicture}

\begin{axis}[%
width=4.602in,
height=0.806in,
at={(0.772in,3.051in)},
scale only axis,
xmin=0,
xmax=40,
ymin=65,
ymax=100,
ylabel style={font=\color{white!15!black}},
ylabel near ticks,
ylabel={$N_c$},
axis background/.style={fill=white},
title style={font=\bfseries},
title style={at={(0.05,1)},  anchor=west},
title={$p_x = 80 \%$},
axis x line*=bottom,
axis y line*=left,
legend style={legend cell align=left, align=left, legend columns = -1},
legend style={at={(0.34,1.1)},anchor=west}
]
\addplot [color=mycolor1, line width=2.0pt, mark=o, mark options={solid, mycolor1}]
  table[row sep=crcr]{%
4	89.1\\
10	89.5\\
15	89.1\\
20	86.8\\
25	83.4\\
30	79.7\\
35	74.5\\
40	69.3\\
};
\addlegendentry{SMPC}

\addplot [color=mycolor2, line width=2.0pt, mark=o, mark options={solid, mycolor2}]
  table[row sep=crcr]{%
4	90.5\\
10	90.5\\
15	91.5\\
20	92.2\\
25	92.7\\
30	93.2\\
35	94\\
40	94.9\\
};
\addlegendentry{RSMPC}

\addplot [color=black, line width=2.0pt]
  table[row sep=crcr]{%
4	80\\
10	80\\
15	80\\
20	80\\
25	80\\
30	80\\
35	80\\
40	80\\
};
\addlegendentry{satisfaction threshold}

\end{axis}

\begin{axis}[%
width=4.602in,
height=0.806in,
at={(0.772in,1.762in)},
scale only axis,
xmin=0,
xmax=40,
ymin=70,
ymax=100,
ylabel style={font=\color{white!15!black}},
ylabel={$N_c$},
ylabel near ticks,
axis background/.style={fill=white},
title style={font=\bfseries},
title style={at={(0.05,1)},  anchor=west},
title={$p_x = 85 \%$},
axis x line*=bottom,
axis y line*=left,
]
\addplot [color=mycolor1, line width=2.0pt, mark=o, mark options={solid, mycolor1}]
  table[row sep=crcr]{%
4	91\\
10	90.6\\
15	90.8\\
20	90.9\\
25	88.7\\
30	85.5\\
35	82\\
40	77\\
};

\addplot [color=mycolor2, line width=2.0pt, mark=o, mark options={solid, mycolor2}]
  table[row sep=crcr]{%
4	91.8\\
10	92.3\\
15	92.7\\
20	93.3\\
25	94.2\\
30	94.8\\
35	95.4\\
40	95.6\\
};

\addplot [color=black, line width=2.0pt]
  table[row sep=crcr]{%
4	85\\
10	85\\
15	85\\
20	85\\
25	85\\
30	85\\
35	85\\
40	85\\
};

\end{axis}

\begin{axis}[%
width=4.602in,
height=0.806in,
at={(0.772in,0.473in)},
scale only axis,
xmin=0,
xmax=40,
xlabel style={font=\color{white!15!black}},
xlabel={percentage of model mismatch [\%]},
ymin=80,
ymax=100,
ylabel style={font=\color{white!15!black}},
ylabel near ticks,
ylabel={$N_c$},
axis background/.style={fill=white},
title style={font=\bfseries},
title style={at={(0.05,1)},  anchor=west},
title={$p_x = 90 \%$},
axis x line*=bottom,
axis y line*=left,
]
\addplot [color=mycolor1, line width=2.0pt, mark=o, mark options={solid, mycolor1}]
  table[row sep=crcr]{%
4	94.4\\
10	94\\
15	93.7\\
20	93.4\\
25	93.1\\
30	91.9\\
35	90\\
40	87.5\\
};

\addplot [color=mycolor2, line width=2.0pt, mark=o, mark options={solid, mycolor2}]
  table[row sep=crcr]{%
4	94.7\\
10	95\\
15	95.5\\
20	96\\
25	96\\
30	96.1\\
35	96.4\\
40	96.9\\
};

\addplot [color=black, line width=2.0pt]
  table[row sep=crcr]{%
4	90\\
10	90\\
15	90\\
20	90\\
25	90\\
30	90\\
35	90\\
40	90\\
};

\end{axis}

\end{tikzpicture}
        \caption{ Minimum empirical constraint satisfaction~$N_c$ computed for~$N_s = 1000$ noise realizations, for different model mismatch and imposed chance constraint satisfaction level. The top plot shows the case of a probability level~$p_x = 80 \%$, middle plot refers to~$p_x = 85 \%$, and bottom plot, to~$p_x = 90 \%$. The minimum satisfaction~$N_c$ is shown in blue for nominal SMPC, while in red for our proposed scheme RMPC. In black,~$p_x$ is depicted for each considered scenario showing the minimum satisfaction threshold.}
\label{fg:sim1}
\end{figure}
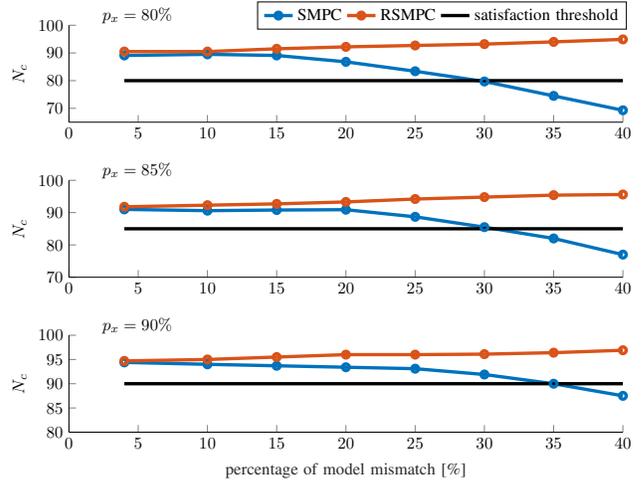

\begin{figure*}[!h]
\normalsize
\minipage{0.47\textwidth}
\centering
  \scalebox{0.6}{\input{Images/BC_1.tex}}
  \caption{Mean and 2 standard deviations computed over 10000 realizations of outdoor temperature sequences depicted in the top subplot. Middle subplot shows the evolution of the temperature in room 4 and the bottom subplot, of the input. The black horizontal lines represent the constraints, while the vertical dashed line separates closed-loop behavior from prediction.}\label{fig:BC_1}
\endminipage
\hspace{20pt}
\minipage{0.47\textwidth}
\centering
  \vspace{-0.45cm}
  \scalebox{0.6}{\input{Images/BC_2.tex}}
  \caption{Median and~$100\%$-quantiles of the state and input nominal tubes for 10000 realizations of outdoor temperature sequences. In blue we depict the nominal tube upperbound and in red the lowerbound. The black continuous lines represent the original constraints and the dashed lines are the tightened constraints with respect to the~$k$-step RPRS.}\label{fig:BC_2}
\endminipage\hfill
\end{figure*}

We conduct numerical simulations comparing our approach (RSMPC) with the nominal stochastic MPC (SMPC) scheme in~\cite{LH20} that is not designed to handle the presence of model mismatch. In this framework, we define the nominal model with respect to~$\bar\theta_\alpha$, and therefore as~$\alpha$ increases, so does the unaccounted amount of model mismatch. For each pair~$\alpha$ and~$p_x$, we run simulations for~$N_s = 1000$ noise sequence realizations over a time horizon of length~$T = 100$, and we compute the empirical constraint satisfaction~$N_c(k)$ for each time-step~$k \in [0,T]$ as
\begin{equation*}
    N_c(k) = \frac{\#(|[x^{\text{true}}_k]_2| \leq 3)}{N_s} \cdot 100 \; [\%],
\end{equation*}
i.e. the percentage of times the true simulated trajectory satisfies the constraint at time-step~$k$ divided by the total number of simulations. Then,~$N_c$ is obtained as the minimum over all time-steps~$ N_c = \min_{k \in (0:T)} N_c(k)$.

Figure~\ref{fg:sim1} depicts~$N_c$ against the percentage of model mismatch, defined as~$100\alpha \; [\%]$. Each plot shows the behavior of the proposed scheme compared with the SMPC scheme for a fixed value of~$p_x$. We observe that as model mismatch increases, the ability of SMPC to satisfy the imposed probability level decreases until it falls below the satisfaction threshold. On the other hand, the proposed approach has similar behavior to the SMPC scheme for small mismatch, and becomes slightly more conservative only for larger values of~$\alpha$.

\subsection{Building temperature control}
\label{subsec:building control}

Motivated by the increasing interest as an application for MPC~\cite{FO14,JD20}, the case of a building temperature control problem is considered in the following. The goal is to maintain a predefined temperature in four adjacent rooms, for which fluctuations are controlled by heating/cooling units and vary according to the interaction between rooms and the outside. The system dynamics depends on physical parameters that are often not precisely known, and therefore we conduct robustness tests with respect to parameters of interest, e.g. thermal conductance. The dynamics is also subject to the effect of the uncertain outside temperature, which we model as an additive disturbance sequence, correlated in time. The system has the following form:
\begin{equation*}
    x_{k+1} = A(\theta) x_k + B u_k + B_w w_k,
\end{equation*}
where the state~$x_k \in \mathbb{R}^4$ captures the room temperatures, the input~$u_k\in \mathbb{R}^4$ controls each room, and the disturbance sequence representing outdoor temperature fluctuations~$W = [w_0, \dots, w_T]^\top \sim \mathcal{N}(\mu_W, \Sigma_W)$ is Gaussian distributed. Following Remark~\ref{rmk:nonzeromean}, we split it into a deterministic sequence~$\overline{W} = \mu_W$, and a zero-mean stochastic sequence~$\widetilde{W} = W - \overline{W}$, to ease the computation of the~$k$-step RPRS. 

Model uncertainty, is represented by the parameter~$\theta = [\theta_1, \theta_2]^\top \in \Theta \subseteq \mathbb{R}^2$, with~$\Theta_{\text{rooms}} = \{ \theta \; \big| \; ||\theta||_\infty \leq 1 \}$. By considering an error on the thermal conductance between rooms 1-2, and 1-3, uncertainty only affects matrix~$A$, while~$B$ and~$B_w$ are assumed to be known. Therefore, we have that~$A(\theta) = A_0 + \theta_1 A_1 + \theta_2 A_2$, where~$A_0$ is the nominal matrix, and~$A_1, A_2$ are computed by perturbing~$A_0$ by~$\pm 10 \%$. 

Note that while in this paper we do not consider parametric uncertainty in~$B_w$, the computation of~$k$-step RPRS can include the case in which the matrix~$B_w$ depends affinely on an unknown parameter~$\theta$, as for the dynamics matrices in~\eqref{eq:parametric_uncertainty}, since Lemma~\ref{lemma:vertex_property} can be applied. 

We choose the temperature to be tracked as~$T_{ref} = 21^{\circ}$ for each room, and define the cost function as~$l(x,u) = || x - T_{ref}\mathbf{1}_{4,1} ||^2_{Q} +  || u ||_1$, where~$Q = 50I_4$. The system is subject to the following state chance constraints for dimensions~$\{j\}_{j=1}^4$:
\begin{equation*}
    \text{Pr}([x_k]_j \geq 20^\circ) \geq p_x, \quad \text{Pr}([x_k]_j \leq 22^\circ) \geq p_x,
\end{equation*}
where~$p_x = 0.9$. By choosing base set~$\bar\Z$ aligned with the state constraints, we can design a non-conservative constraint tightening by constructing a half-space~$k$-step RPRS for each dimension~$j = 1, \dots, 4$ at probability level~$p_x$ (see Remark~\ref{rmk:non-conservative-tightening}). The system is also subject to input chance constraints for dimensions~$j=1,\dots,4$:
\begin{equation*}
    \text{Pr}([u_k]_j \geq -4.5kW) \geq p_u, \quad \text{Pr}([u_k]_j \leq 4.5kW) \geq p_u,
\end{equation*}
 where~$p_u = 0.99$. In this case, the half-space input constraint tightening of probability level~$p_u$ will determine, according to Boole's inequality, a joint chance constraint satisfaction level of at least~$0.96$. Furthermore, the polytopic terminal set is computed with respect to the tightened state and input constraints similarly to the illustrative example in section~\ref{subsec:illustrative example}. 

Simulations depicted in Figures~\ref{fig:BC_1} and~\ref{fig:BC_2} are carried out over a period of~$T=29$ hours, for which we show the closed-loop behavior and the prediction at the last time-step. We average results over 10000 outdoor temperature sequence realizations that are shown in terms of mean and 2 standard deviations in the top subplot of Figure~\ref{fig:BC_1}. A similar representation of the state corresponding to room 4 and the input is given in the two subplots below. As expected, constraint violations are visible in both closed-loop and prediction, particularly when the input action tries to counteract low outdoor temperature fluctuations. In Figure~\ref{fig:BC_2} we observe the behavior of the state and input nominal tubes corresponding to room 4, for which we plot the median behavior and the~$100\%$-quantiles to show that for all simulated noise realizations, the tightened constraints are always satisfied. The closed-loop nominal tube behavior is in median non-conservative, and tends to enlarge in prediction when approaching the terminal set, which is constructed to guarantee containment of the reference temperature.

Finally, we provide a study of the behavior of the RSMPC scheme in terms of closed-loop cost for different combinations of parametric uncertainty and of probability level of chance constraint satisfaction. In Figure~\ref{fig:BC_3}, we depict the percentage of cost increase with respect to a nominal SMPC scheme simulated with no model mismatch. For each pair of~$\alpha \Theta_{\text{rooms}}, \; \alpha \in \{ 0.2, 0.6, 1\}$ and~$p_x = p_u \in \{ 0.8, 0.85, 0.9, 0.92, 0.95, 0.97\}$, we show mean and 2 standard deviations computed with respect to 1000 simulations. While the influence of parametric uncertainty is particularly noticeable for large values of~$\alpha$, the relative cost increase is generally very small.

\begin{figure}[!h]
\normalsize
\centering
  \scalebox{0.6}{
%
%
\definecolor{mycolor1}{rgb}{0.00000,0.44700,0.74100}%
\definecolor{mycolor2}{rgb}{0.85000,0.32500,0.09800}%
\definecolor{mycolor3}{rgb}{0.92900,0.69400,0.12500}%
\begin{tikzpicture}

\begin{axis}[%
width=4.602in,
height=3.506in,
at={(0.772in,0.473in)},
scale only axis,
xmin=0.5,
xmax=6.5,
xlabel style={font=\color{white!15!black}},
xlabel={chance-constraint probability level},
xtick = {1,2,3,4,5,6},
xticklabels={$0.8$,$0.85$,$0.9$,$0.92$,$0.95$,$0.97$},
ymin=0,
ymax=0.4,
ylabel style={font=\color{white!15!black}},
ylabel={percentage cost increase [\%]},
ylabel near ticks,
axis background/.style={fill=white},
title style={font=\bfseries},
title={The combined effect of parametric uncertainty and additive noise},
axis x line*=bottom,
axis y line*=left,
legend style={legend cell align=left, align=left, draw=white!15!black},
legend style={at={(0.15,1)},anchor=north}
]
\addplot [color=mycolor1, mark=square, mark options={solid, mycolor1}]
 plot [error bars/.cd, y dir = both, y explicit]
 table[row sep=crcr, y error plus index=2, y error minus index=3]{%
1	0.00711466868419386	0.0122500093465261	0.00236551512196037\\
2	0.00763535205927646	0.0130035769703163	0.00266618211022251\\
3	0.00847393780136763	0.0138596132479352	0.00348318685294835\\
4	0.00908523750655643	0.0139674071526263	0.00455927408915535\\
5	0.0111807427304322	0.0134860334600528	0.00904391350247113\\
6	0.0152516422212701	0.0215758295795121	0.00939996869069315\\
};
\addlegendentry{$\alpha\text{ = 0.2}$}

\addplot [color=mycolor2, mark=square, mark options={solid, mycolor2}]
 plot [error bars/.cd, y dir = both, y explicit]
 table[row sep=crcr, y error plus index=2, y error minus index=3]{%
1	0.0363076661148432	0.0572695359559283	0.0169221675913933\\
2	0.0394101189245877	0.0610541414174604	0.0193750369504864\\
3	0.044860648545364	0.0675237405270313	0.0238594112960122\\
4	0.0484357900714327	0.0698411540593469	0.0285921751689067\\
5	0.0574035134215611	0.0758387941846461	0.0403154098457925\\
6	0.0811224026072255	0.12412191833151	0.0413356104492557\\
};
\addlegendentry{$\alpha\text{ = 0.6}$}

\addplot [color=mycolor3, mark=square, mark options={solid, mycolor3}]
 plot [error bars/.cd, y dir = both, y explicit]
 table[row sep=crcr, y error plus index=2, y error minus index=3]{%
1.1	0.0686999080868222	0.122474149810392	0.0189695886336549\\
2.1	0.0742301190111316	0.128969196666895	0.0235601518929096\\
3.1	0.0829280374982533	0.137102653581223	0.032725974771286\\
4.1	0.0891524828327661	0.139457249080399	0.0425179845946477\\
5.1	0.108576736733368	0.159804846921974	0.0610921736207359\\
6.1	0.155392414439159	0.242543811860219	0.0747525635696356\\
};
\addlegendentry{$\alpha\text{ = 1}$}

\end{axis}

\begin{axis}[%
width=5.938in,
height=4.302in,
at={(0in,0in)},
scale only axis,
xmin=0,
xmax=1,
ymin=0,
ymax=1,
axis line style={draw=none},
ticks=none,
axis x line*=bottom,
axis y line*=left,
legend style={legend cell align=left, align=left, draw=white!15!black}
]
\end{axis}
\end{tikzpicture}
  \caption{Percentage of overall closed-loop cost increase expressed in terms of mean and 2 standard deviations computed over 1000 outdoor temperature realizations for each pair of~$\alpha \Theta$ and~$p_x = p_u$.} \label{fig:BC_3}
\end{figure}
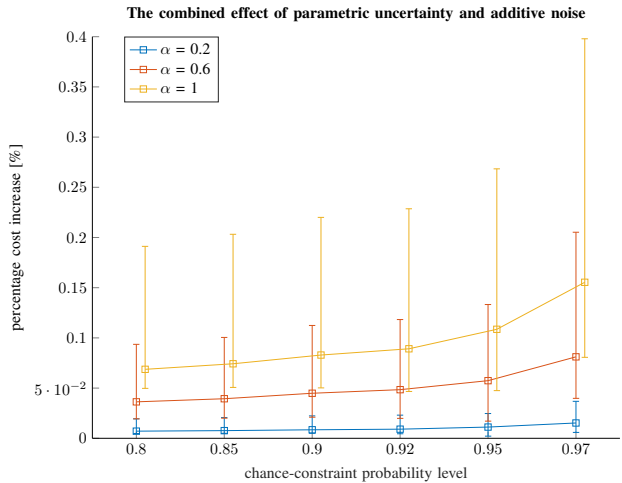

\section{Conclusions}
A model predictive control scheme for the control of systems affected by bounded parametric uncertainty and additive stochastic noise - with potentially unbounded support - was presented in this paper. The effects of the sources of uncertainty are separated by splitting the dynamics into two components: the first is only affected by bounded parametric uncertainty, dealt with by constructing a homothetic tube along the MPC prediction horizon, which we refer to as the nominal tube. The second evolves autonomously with uncertain dynamics, and is perturbed by additive stochastic noise that is handled by means of the stochastic error tube, i.e. a sequence of~$k$-step RPRS for which we present a synthesis procedure both for i.i.d. and correlated noise sequences. The tubes, and the additional use of indirect feedback, provide recursive feasibility and closed-loop chance constraint satisfaction of the proposed control scheme, while allowing for using point-wise estimate updates of the unknown parameters to compute the cost function. Potential future research involves a formal integration of an online learning scheme while maintaining probabilistic constraint satisfaction guarantees. Finally, we compute a bound for the average asymptotic~$l2$-norm of the state, under the assumption of i.i.d. additive noise sequences affecting the system, and quadratic cost functions. Results are demonstrated on both an illustrative example, and on a building temperature control problem. 
\label{sec:conclusions}

\bibliographystyle{IEEEtran}
\bibliography{bibliography}

\addtocounter{equation}{1}%
\setcounter{storeeqcounter_two}%
{\value{equation}}%
\begin{figure*}[!t]
\normalsize
\setcounter{tempeqcounter}{\value{equation}} 
\begin{IEEEeqnarray}{rCl}
\setcounter{equation}{\value{storeeqcounter_two}} 
    && \ev\left[ J^*(x^{\text{true}}_{k+1},\bar \theta_{k+1})  \; \big| \;  x^{\text{true}}_k \right ] - J^*(x^{\text{true}}_k,\bar \theta_k)  \leq \ev\left[ J(x^{\text{true}}_{k+1},\bar \theta_{k+1},\bar V) \; \big| \;  x^{\text{true}}_k \right ] - J^*(x^{\text{true}}_k,\bar \theta_k)  \nonumber \\
    && = \ev\left[ ||\bar x_N||^2_P + \sum_{i=0}^{N-1} ||\bar x_i||^2_Q + ||\bar u_i||^2_R \right ] - \ev \left[\left( ||\hat x_N||^2_P + \sum_{i=0}^{N-1} ||\hat x_i||^2_Q + ||\hat u_i||^2_R  \right)\right] \nonumber \\
    && = \ev\left[ ||\Acl(\hat \theta) \hat x_N + \hat w_N +\delta \hat{x}_N||^2_P + \sum_{i=0}^{N-1} ||\hat x_{i+1} + \delta \hat{x}_i||^2_Q + ||K(\hat x_{i+1} + \delta \hat{x}_i) + \hat v_{i+1}||^2_R \right ] \nonumber \\
    && \quad - \ev \left[||\hat x_N||^2_P + \sum_{i=0}^{N-1} ||\hat x_i||^2_Q + ||\hat u_i||^2_R  \right] \nonumber \\
    && \leq \ev\left[ (1 + \epsilon_0)||\Acl(\hat \theta)\hat x_N||^2_P + ||\hat w_N||^2_P + (1 + \frac{1}{\epsilon_0})||\delta \hat{x}_N||^2_P \right ] + \ev\left[ \sum_{i=0}^{N-1} (1 + \epsilon_0) (||\hat x_{i+1}||^2_Q + ||\hat u_{i+1}||^2_R) + (1 + \frac{1}{\epsilon_0})||\delta \hat x_i||^2_{\bar Q}  \right ] \nonumber \\
\label{eq:development_proof_3}
    && \quad -  \ev \left[ ||\hat x_N||^2_P + \sum_{i=0}^{N-1} ||\hat x_i||^2_Q + ||\hat u_i||^2_R  \right]  \\
    && = \underbrace{\ev\left[ ||\Acl(\hat \theta)\hat x_N||^2_P + ||\hat x_N||^2_Q + ||\hat u_N ||^2_R - ||\hat x_N ||^2_P \right ]}_{{\color{red}(1)}} +  \underbrace{\ev\left[ \sum_{i=0}^{N-2} ||\hat x_{i+1}||^2_Q + ||\hat u_{i+1}||^2_R - \left (\sum_{i=1}^{N-1} ||\hat x_i||^2_Q + ||\hat u_i||^2_R \right)\right ]}_{{\color{red}(2)}} \nonumber \\
    && \quad + \underbrace{ \ev\left[ -||x^{\text{true}}_k||^2_Q -||u^{\text{true}}_k||^2_R + ||\hat w_N||^2_P \right ]}_{{\color{red}(3)}} + \epsilon_0 \underbrace{  \ev\left[ ||\Acl(\hat \theta)\hat x_N||^2_P + \sum_{i=0}^{N-1} ||\hat x_{i+1}||^2_Q + ||\hat u_{i+1}||^2_R \right ]}_{{\color{red}(4)}} \nonumber \\
    && \quad + (1 + \frac{1}{\epsilon_0}) \underbrace{  \ev\left[ ||\delta \hat x_N||^2_P + \sum_{i=0}^{N-1} ||\delta \hat x_i||^2_{\bar Q}  \right ]}_{{\color{red}(5)}}, \nonumber 
\end{IEEEeqnarray}
\setcounter{equation}{\value{tempeqcounter}} 
\hrulefill
\end{figure*}

\section{Appendix}

\subsection{Proof of Theorem~\ref{thm:cost_bound_in_expectation}}
\label{apdx:average_asymptotic_cost_bound}

In the following, a performance analysis of the~$l2$-norm of the closed-loop state~$x_k^{\text{true}}$ is carried out by means of an asymptotic analysis of its average behavior. The idea is to first quantify the expected cost difference between two consecutive time steps~$k$ and~$k+1$ by providing a bound in expectation, which in turn is used to show that the average asymptotic~$l2$-norm is bounded.

We assume that at time-step~$k+1$ the system evolves under a shifted sequence~$\bar{V} = \{\bar v_0, \dots, \bar v_{N-1} \} = \{\hat v_1, \dots, \hat v_{N-1}, 0 \}$, where~$\{\hat v_{i} \}_{i=0}^{N-1}$ is the optimal control sequence at time-step~$k$, obtained by solving problem~\eqref{eq:mpc_problem}. We then define the predicted state sequence at time~$k$ as~$\{\hat x_0, \dots, \hat x_N \}$, and at time~$k+1$ as~$\{\bar x_0, \dots, \bar x_N \}$, evolving under~$\bar V$. Furthermore, it holds that~$w_k \myeq \hat w_0$, i.e. the closed-loop and the open-loop disturbance realizations are drawn from the same distribution, therefore 
\begin{align*}
    \bar x_0 & \myeq \hat x_{1} + ( A(\thetatrue) - A(\bar\theta_k))x^{\text{true}}_k  + (B(\thetatrue) - B(\bar\theta_k))u^{\text{true}}_k \\
    & = \hat x_{1} + \tilde x_k.
\end{align*}
The predicted states sequences at, respectively, time-step~$k$ and~$k+1$ have the following relation for~$i= 0, \dots, N-1$
\begin{equation*}
    \bar x_i \myeq \hat x_{i+1} + \delta \hat x_i, 
\end{equation*}    
and at the last predicted time-step~$i=N$ we have~$\bar x_N \myeq \Acl(\bar \theta_k)\hat x_{N} + \hat w_N + \delta \hat x_N$. The terms~$\{ \delta \hat x_i \}_{i=0}^N$ represent the cumulated prediction error due to model mismatch, and evolve according to the following dynamics
\begin{subequations}
\begin{align}
    \delta \hat x_{0} & = \tilde x_k \label{eq:init_pred_err}\\
    \delta \hat x_{i+1} & = \Acl(\bar\theta_{k+1}) \delta \hat x_{i} + (\Acl(\bar\theta_{k+1}) - \Acl(\bar\theta_k))\hat x_{i+1} \nonumber \\
    & + (B(\bar\theta_{k+1}) - B(\bar\theta_k))\hat v_{i+1}. \label{eq:prop_err}
\end{align}
\label{eq:delta_x}
\end{subequations}
We observe two sources of model mismatch propagated along the horizon, as the model parameters are updated from time-step~$k$ to~$k+1$. The initial prediction error~$\tilde x_k$, depending on the difference between the true unknown model parameter~$\thetatrue$ and the previous estimate~$\bar\theta_k$, and the propagation error in~\eqref{eq:prop_err}, depending on the difference between the updated estimate~$\bar\theta_{k+1}$ and the previous~$\bar\theta_k.$
The procedure for bounding the expected cost difference is outlined in~\eqref{eq:development_proof_3}, in which conditioning is omitted for the sake of readability, and~$\bar{Q} = Q + KRK^\top$. All inequalities make use of the i.i.d. assumption on the additive disturbance sequence, which also entails that~$\hat x_{i+1}$ and~$\delta x_i$ are independent given the initial condition~$x_k^{\text{true}}$, and therefore allows for applying Lemma~\ref{lemma:young_inequality}. We now analyse each term independently, starting from {\color{red}(1)}
\begin{subequations}
\begin{align*}
    & \ev\left[ ||\Acl(\hat \theta)\hat x_N||^2_P + ||\hat x_N ||^2_Q + ||\hat u_N ||^2_R - ||\hat x_N ||^2_P \right ] \\
    & = \ev\left[ ||\hat x_N ||^2_{ \underbrace{\Acl(\hat \theta)^\top P\Acl(\hat \theta) + Q + KRK^\top - P}_{= \tilde{P} \preceq 0} }\right ] \leq 0,
\end{align*}
\end{subequations} 
\setcounter{storeeqcounter_three}%
{\value{equation}}%
which holds thanks to~\eqref{eq:iid_P}. Then, the second term {\color{red}(2)} vanishes since we are subtracting the shifted sequence
\begin{align*}
      \ev\left[ \sum_{i=0}^{N-2} ||\hat x_{i+1}||^2_Q + ||\hat u_{i+1}||^2_R - \left (\sum_{i=1}^{N-1} ||\hat x_i||^2_Q + ||\hat u_i||^2_R \right)\right ] = 0.
\end{align*}
For the third term {\color{red}(3)}, we can explicitly evaluate the expected value since~$x_k,u_k$ are given and the disturbance distribution in known
\begin{align*}
     & \ev\left[ -||x^{\text{true}}_k||^2_Q -||u^{\text{true}}_k||^2_R + ||\hat w_N||^2_P \right ] \\
     & = \text{tr}(P \Sigma_w) - ||x^{\text{true}}_k||^2_Q - ||u^{\text{true}}_k||^2_R. 
\end{align*}
The last two terms are costs incurred due to the presence of model mismatch. The expected value of the quadratic form {\color{red}(4)} is explicitly evaluated
\begin{subequations}
\begin{align*}
    & \ev\left[ ||\Acl(\hat \theta)\hat x_N||^2_P + \sum_{i=0}^{N-1} ||\hat x_{i+1}||^2_Q + ||\hat u_{i+1}||^2_R \right ] \\
    & = \ev\left[ ||\Acl(\hat \theta)\hat x_N||^2_P + ||\hat x_{N}||^2_Q + ||\hat u_{N}||^2_R \right ] \\
    & \quad + \ev\left[ \sum_{i=0}^{N-2} ||\hat x_{i+1}||^2_Q + ||\hat u_{i+1}||^2_R \right ] \\
    & = \ev\left[ ||\hat x_N||^2_{\tilde{P}} + ||\hat x_N||^2_P + \sum_{i=1}^{N-1} ||\hat x_{i}||^2_Q + ||\hat u_{i}||^2_R \right ] \\
    & \leq   ||\ev[\hat x_N] ||^2_P + \sum_{i=1}^{N-1} ||\ev[\hat x_{i}] ||^2_Q + ||\ev[\hat u_{i}] ||^2_R  \\
    & \quad +   \text{tr}(P \var(\hat x_N)) + \sum_{i=1}^{N-1}\text{tr}(Q \var(\hat x_i)) + \text{tr}(R \var(\hat u_i)) \\ 
    &  \leq c_0 ||x^{\text{true}}_k||_2^2 + \text{tr}(P \bar \Sigma^N_w) + \text{tr}(Q \sum_{i=1}^{N-1} \bar \Sigma^i_w) + \text{tr}(R K \sum_{i=1}^{N-1} \bar \Sigma^i_w K^\top) \\
    &= c_0 ||x^{\text{true}}_k||_2^2 + \text{tr}(\Sigma_0), 
\end{align*}
\end{subequations}
where the first inequality uses~$\tilde{P} \preceq 0$. The second inequality uses the same argument used in~\cite{ML20}, i.e. the cost associated with the expected values of the predicted states is a continuous, piecewise quadratic function in~$x_0, \; \forall \theta \in \Theta$~\cite{AB02}. Therefore, it can be upperbounded with a quadratic function of the initial condition for some~$c_0 > 0$. The variances can be expressed exactly as a function of~$\bar \Sigma^i_w = \sum_{l=0}^{i-1} \Acl(\bar \theta_k)^{l} \Sigma_w  \Acl^\top(\bar \theta_k)^{l}$. We then obtain a bound for {\color{red}(5)}
\begin{subequations}
\begin{align*}
   & \ev\left[ ||\delta \hat x_N||^2_P + \sum_{i=0}^{N-1} ||\delta \hat x_i||^2_{\bar Q}  \right ] \\ 
   & = ||\ev[\delta \hat x_N]||^2_P + \sum_{i=0}^{N-1} ||\ev[\delta \hat x_i]||^2_{\bar Q}  \\
   & \quad + \text{tr}(P\var(\delta \hat x_N)) + \sum_{i=0}^{N-1} \text{tr}(\bar Q\var(\delta \hat x_i)) \\
   & \leq c_1 || \tilde x_k ||_2^2 + \text{tr}(P\var(\delta \hat x_N)) + \sum_{i=0}^{N-1} \text{tr}(\bar Q\var(\delta \hat x_i)) \\
   & = c_1 || \tilde x_k ||_2^2 + \text{tr}(P \bar {\bar {\Sigma}}^N_w ) + \text{tr}(\bar Q \sum_{i=0}^{N-1} \bar {\bar {\Sigma}}^i_w) = c_1 || \tilde x_k||_2^2 + \text{tr}(\Sigma_1), 
\end{align*}
\end{subequations}
where again we make use of the bound on the cost of the expected values of~$\delta \hat x_i$, which is a function of the initial condition~$\tilde x_k$~\eqref{eq:init_pred_err}. The variances can be expressed as a function of~$\bar \Sigma^i_w$, and therefore depend on the known noise variance~$\Sigma_w$, resulting in the following relation:~$\bar {\bar {\Sigma}}^i_w = \sum_{j=1}^i \Acl(\bar \theta_{k+1})^{i - j} (\Acl(\bar\theta_{k+1}) - \Acl(\bar\theta_k)) \bar\Sigma^j_w  (\Acl(\bar\theta_{k+1}) - \Acl(\bar\theta_k))^\top \Acl^\top(\bar \theta_{k+1})^{i - j}$.

\begin{figure*}[!t]
\normalsize
\setcounter{tempeqcounter}{\value{equation}} 
\begin{IEEEeqnarray}{rCl}
\setcounter{equation}{\value{storeeqcounter_three}} 
  && \lim_{T \rightarrow \infty} \frac{1}{T} \ev\left[  \sum_{k=0}^T || \tilde x_k ||^2_2 \; \Bigg| \; x^{\text{true}}_0 \right] \nonumber  = \lim_{T \rightarrow \infty} \frac{1}{T} \ev\left[  \sum_{k=0}^T  || ( A(\thetatrue) - A(\bar\theta_k))x^{\text{true}}_k  + (B(\thetatrue) - B(\bar\theta_k))u^{\text{true}}_k ||^2_2  \; \Bigg| \; x^{\text{true}}_0 \right] \nonumber \\
    && = \lim_{T \rightarrow \infty} \frac{1}{T} \ev\left[  \sum_{k=0}^T  || D(z^{\text{true}}_k,Kz^{\text{true}}_k + v^{\text{true}}_k)(\thetatrue - \bar \theta_k) + ( \Acl(\thetatrue) - \Acl(\bar\theta_k))e^{\text{true}}_k ||^2_2 \; \Bigg| \; x^{\text{true}}_0 \right] \nonumber \\
    && \leq  \lim_{T \rightarrow \infty} \frac{1}{T} \ev\left[  \sum_{k=0}^T  (1 + \epsilon_1)|| D(z^{\text{true}}_k,Kz^{\text{true}}_k + v^{\text{true}}_k)(\thetatrue - \bar \theta_k) ||^2_2 + (1 + \frac{1}{\epsilon_1})|| ( \Acl(\thetatrue) - \Acl(\bar\theta_k))e^{\text{true}}_k ||^2_2\; \Bigg| \; x^{\text{true}}_0 \right] \nonumber \\
    && \leq (1 + \epsilon_1)\frac{||\Delta \theta^{\text{max}}||^2_2}{\mu} +  \lim_{T \rightarrow \infty} (1 + \frac{1}{\epsilon_1})\frac{1}{T} \ev\left[  \sum_{k=0}^T || ( \Acl(\thetatrue) - \Acl(\bar\theta_k))e^{\text{true}}_k ||^2_2\; | \; x^{\text{true}}_0 \right] \nonumber \\
    && \leq (1 + \epsilon_1)\frac{||\Delta \theta^{\text{max}}||^2_2}{\mu} + (1 + \frac{1}{\epsilon_1})\max_{\theta \in \Theta} || \Acl(\thetatrue) - \Acl(\theta) ||^2_2 \lim_{T \rightarrow \infty} \frac{1}{T}  \sum_{k=0}^T \ev \left[ ||e^{\text{true}}_k ||^2_2\; | \; x^{\text{true}}_0 \right] \nonumber \\
    && = (1 + \epsilon_1)\frac{||\Delta \theta^{\text{max}}||^2_2}{\mu} + (1 + \frac{1}{\epsilon_1})\max_{\theta \in \Theta} || \Acl(\thetatrue) - \Acl(\theta) ||^2_2 \lim_{T \rightarrow \infty} \frac{1}{T}  \sum_{k=0}^T  ||\ev \left[e^{\text{true}}_k \; | \; x^{\text{true}}_0 \right] ||^2_2 + \text{tr}(\var(e^{\text{true}}_k \; | \; x^{\text{true}}_0)) \nonumber \\
    && = (1 + \epsilon_1)\frac{||\Delta \theta^{\text{max}}||^2_2}{\mu} + (1 + \frac{1}{\epsilon_1})\max_{\theta \in \Theta} || \Acl(\thetatrue) - \Acl(\theta) ||^2_2 \lim_{T \rightarrow \infty} \frac{1}{T}  \sum_{k=0}^T   \text{tr}(\Acl(\thetatrue)^k \Sigma_w \Acl^\top(\thetatrue)^k) \nonumber \\
    && = (1 + \epsilon_1)\frac{||\Delta \theta^{\text{max}}||^2_2}{\mu} + (1 + \frac{1}{\epsilon_1})\max_{\theta \in \Theta} || \Acl(\thetatrue) - \Acl(\theta) ||^2_2 \lim_{T \rightarrow \infty} \frac{\text{tr}(\Sigma_\infty)}{T} = (1 + \epsilon_1)\frac{||\Delta \theta^{\text{max}}||^2_2}{\mu},\label{eq:last_one}
\end{IEEEeqnarray}
\setcounter{equation}{\value{tempeqcounter}} 
\hrulefill
\end{figure*}

Putting everything together and rearranging terms, we obtain that there exists an~$\epsilon_0 > 0$ such that
\begin{align*}
     & \ev\left[ J^*(x^{\text{true}}_{k+1},\bar \theta_{k+1}) \; \big| \; x^{\text{true}}_k \right] - J^*(x^{\text{true}}_k,\bar \theta_k) \\
     & \leq \text{tr}(P \Sigma_w) - ||x^{\text{true}}_k||^2_Q - ||u^{\text{true}}_k||^2_R \\
     & \quad +  \epsilon_0(c_0 ||x^{\text{true}}_k||_2^2 + \text{tr}(\Sigma_0)) + (1 + \frac{1}{\epsilon_0})(c_1 ||\tilde x_k||_2^2 + \text{tr}(\Sigma_1)) \\
     & \leq - (\lambda_{max}(\bar{Q}) - \epsilon_0 c_0) || x^{\text{true}}_k ||_2^2  + (1 + \frac{1}{\epsilon_0}) c_1||\tilde x_k||_2^2 + \text{tr}(\Sigma),
\end{align*}
with~$\lambda_{max}(\bar{Q}) - \epsilon_0 c_0 > 0$. We can now proceed with an analysis of the asymptotic behavior of the~$l2$-norm of~$x^{\text{true}}_k$. We start by using a standard argument in stochastic MPC, i.e. repeatedly applying the law of iterated expectations, and using the expected cost difference bound: 
\begin{align*}
& \ev\left[ J^*(x^{\text{true}}_{T},\bar \theta_{T}) \; \big| \; x^{\text{true}}_0 \right] - J^*(x^{\text{true}}_0,\bar \theta_0) \leq \\
& \ev\left[ \sum_{k=0}^T  - C || x^{\text{true}}_k ||_2^2  + (1 + \frac{1}{\epsilon_0}) c_1||\tilde x_k||_2^2 + \text{tr}(\Sigma) \; \Bigg| \; x^{\text{true}}_0 \right],
\end{align*}
where~$ C = (\lambda_{max}(\bar{Q}) - \epsilon_0 c_0)$. Taking the limit for~$T \rightarrow \infty$
\begin{align*}
    & 0 \leq \lim_{T \rightarrow \infty} \frac{1}{T} \left( \ev\left[ J^*(x^{\text{true}}_{T},\bar \theta_{T}) \; | \; x^{\text{true}}_0 \right] - J^*(x^{\text{true}}_0,\bar \theta_0) \right) \leq\\
    & \lim_{T \rightarrow \infty} \frac{1}{T} \ev\left[  \sum_{k=0}^T - C || x^{\text{true}}_k ||_2^2  + (1 + \frac{1}{\epsilon_0}) c_1||\tilde x_k||_2^2 + \text{tr}(\Sigma) \; \Bigg| \; x^{\text{true}}_0 \right] .
\end{align*}
Then, we can derive the~$l2$-norm bound on~$x^{\text{true}}_k$ 
\begin{align*}
    & \lim_{T \rightarrow \infty} \frac{1}{T} \left( \ev\left[  \sum_{k=0}^T || x^{\text{true}}_k ||_2^2 \; \Bigg| \; x^{\text{true}}_0 \right] \right) \\
    & \leq \frac{\text{tr}(\Sigma)}{C} + \frac{(1 + \frac{1}{\epsilon_0}) c_1}{C} \lim_{T \rightarrow \infty} \frac{1}{T} \ev [ \textstyle{ \sum_{k=0}^T}||\tilde x_k||_2^2  \; | \; x^{\text{true}}_0 ].
\end{align*}

This limit can be explicitly computed: in~\eqref{eq:last_one} we express the state as~$x^{\text{true}} = z^{\text{true}} + e^{\text{true}}$, and apply Lemma~\ref{lemma:young_inequality}. The split allows for bounding~$D(z^{\text{true}},Kz^{\text{true}} + v^{\text{true}})$~\eqref{eq:D()} using its supremum, which exists since~$\mathcal{Z}_\infty, \mathcal{V}_\infty$ are compact sets. We then exploit boundedness of~$\Theta$ so that the we can isolate the expected norm of the error state~$e^{\text{true}}_k$, for which we know that~$\ev[e^{\text{true}}_k \; | \; x^{\text{true}}_0 ] = 0, \; \forall k \geq 0$ due to the assumption on the additive noise and since~$e^{\text{true}}_0 = 0$. Furthermore, since~$\Acl(\thetatrue)$ is Hurwitz, we know that under the i.i.d. noise distribution assumption, the variance~$\var(e_k \; | \; x^{\text{true}}_0)$ converges to some matrix~$\Sigma_\infty$ that satisfies the following Lyapunov equation~$\Sigma_\infty = \Acl(\thetatrue) \Sigma_\infty \Acl^\top(\thetatrue) + \Sigma_w$. 

\subsection{Lemmas}
\label{subsec:useful_lemmas}

\begin{lemma} \textit{Vertex property}~\cite{PA00} \\
Let~$F(\theta,x) > 0$ be an inequality of the form 
\begin{equation*}
F(\theta,x) = F_0(\theta) + \sum_{j=0}^{m} x_j F_j(\theta) > 0, 
\end{equation*}
where the functions~$F_j(\theta)$ are affine in~$\theta \in \Theta$ and~$\Theta$ is a convex polytope of~$r$ vertices defined as~$\Theta = \text{co}\{\theta^{1} ,...,\theta^{r} \}.$ Then, the infinite set of LMIs~$F(\theta, x) \succ 0$ holds~$\forall \theta \in \Theta$ if and only if~$F(\theta, x) \succ 0$
holds at each vertex of~$\Theta$, i.e.,
\begin{equation*} 
F(\theta, x) \succ 0, \forall \theta \in \Theta \Leftrightarrow  F(\theta^i , x) \succ 0, i = 1, \dots , r.
\end{equation*} 
\label{lemma:vertex_property}
\end{lemma}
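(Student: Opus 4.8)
The plan is to prove the two directions of the equivalence separately. The forward implication is immediate: if $F(\theta, x) \succ 0$ holds for every $\theta \in \Theta$, then since each vertex $\theta^i$ belongs to $\Theta = \text{co}\{\theta^1, \dots, \theta^r\}$, the inequality in particular holds at every $\theta^i$. The content of the lemma lies in the converse, which I would establish using two facts: that $F(\cdot, x)$ is an affine matrix-valued map in $\theta$ for fixed $x$, and that the positive definite cone is preserved under convex combinations.

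First I would record the affineness. For fixed $x$, each summand $x_j F_j(\theta)$ is a scalar multiple of an affine function of $\theta$, hence affine, and a finite sum of affine functions (including $F_0(\theta)$) is again affine; thus $\theta \mapsto F(\theta, x)$ is affine. The key consequence is that an affine map commutes with convex combinations: writing a generic affine matrix function as $G(\theta) = A_0 + \sum_l A_l [\theta]_l$, one checks directly that for coefficients $\lambda_i \geq 0$ with $\sum_{i=1}^r \lambda_i = 1$,
\begin{equation*}
G\Big( \sum_{i=1}^r \lambda_i \theta^i \Big) = \sum_{i=1}^r \lambda_i\, G(\theta^i),
\end{equation*}
where the normalization $\sum_i \lambda_i = 1$ is exactly what is needed to reproduce the constant term $A_0$. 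Applying this to $G = F(\cdot, x)$ gives $F(\theta, x) = \sum_{i=1}^r \lambda_i\, F(\theta^i, x)$ for any $\theta = \sum_i \lambda_i \theta^i \in \Theta$.

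To finish the converse, I would invoke a positive-combination argument for the cone of positive definite matrices. Assuming $F(\theta^i, x) \succ 0$ for all $i$, fix an arbitrary $\theta \in \Theta$ with convex-combination weights $\lambda_i$, and note that at least one $\lambda_i$ is strictly positive since the weights sum to one. Then for any nonzero vector $v$,
\begin{equation*}
v^\top F(\theta, x)\, v = \sum_{i=1}^r \lambda_i\, v^\top F(\theta^i, x)\, v > 0,
\end{equation*}
because each term $v^\top F(\theta^i, x) v$ is strictly positive and the weights are nonnegative with at least one positive. Hence $F(\theta, x) \succ 0$, and since $\theta \in \Theta$ was arbitrary the infinite family of LMIs holds on all of $\Theta$.

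The argument is short and presents no genuine obstacle; the only points requiring care are ensuring the affine identity uses the normalization $\sum_i \lambda_i = 1$ (otherwise the constant term is not reproduced), and preserving \emph{strict} definiteness rather than mere semidefiniteness, which is handled by observing that at least one weight $\lambda_i$ is positive. A brief remark on notation is also warranted, since $F(\theta, x) > 0$ is to be read as the matrix inequality $F(\theta, x) \succ 0$.
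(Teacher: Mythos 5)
Your proof is correct: for fixed $x$ the map $\theta \mapsto F(\theta,x)$ is affine, so it commutes with convex combinations, and a convex combination of positive definite matrices (with weights summing to one, hence at least one strictly positive) is positive definite. Note that the paper itself offers no proof of this lemma --- it is stated as a known result with a citation to the reference given there --- so there is nothing to compare against; your self-contained argument is the standard one and fills that gap correctly, including the two points that genuinely need care (the normalization $\sum_i \lambda_i = 1$ to reproduce the constant term, and preservation of strict rather than merely semidefinite positivity).
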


\begin{lemma} \textit{Fenchel-Young inequality in expectation} \\
Consider~$x,y$ independent random variables, then for all matrices~$ R=R^\top \succeq 0$, and~$\forall \; \epsilon > 0$ 
\begin{align*}
 & \ev \left[ || x+ y ||^2_R \right] \leq  (1+\epsilon)\ev \left[ || x||^2_R \right] + (1+\frac{1}{\epsilon})\ev \left[ || y||^2_R \right]    
\end{align*}

\begin{proof}
\begin{align*}
    \ev\left[|| x + y||^2_R \right] & = || \ev[x] + \ev[y]||^2_R + \text{tr}(R \var(x)) + \text{tr}(R \var(y))  \\
    & \leq (1 + \epsilon)|| \ev[x] ||^2_R + (1 + \frac{1}{\epsilon})||\ev[y]||^2_R \\
    & \quad + \text{tr}(R \var(x)) + \text{tr}(R \var(y)) \\ 
    & = (1 + \epsilon)\left( || \ev[x] ||^2_R + \text{tr}(R \var(x)) \right) \\
    & \quad + (1 + \frac{1}{\epsilon}) \left( ||\ev[y]||^2_R + \text{tr}(R \var(y)) \right) \\
    & \quad - \epsilon \text{tr}(R \var(x)) - \frac{1}{\epsilon} \text{tr}(R \var(y)) \\ 
    & \leq (1+\epsilon)\ev \left[ || x||^2_R \right] + (1+\frac{1}{\epsilon})\ev \left[ || y||^2_R \right], 
\end{align*}
where the first inequality makes use of a combination of the Cauchy-Schwarz inequality and of Fenchel-Young's inequality on the norm of the expected values of~$x$ and~$y$ (see Section 3.3.2 \cite{SB04}). The second inequality uses the fact that for all matrices~$X,Y$ symmetric, and positive (semi-)definite,~$\text{tr}\left( XY \right) \geq 0$.
\end{proof}

\label{lemma:young_inequality}
\end{lemma}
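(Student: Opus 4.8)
The plan is to reduce the statement to a purely deterministic Fenchel--Young inequality applied to the \emph{means} of the two random variables, with the second-moment contributions handled separately. The starting point is the standard decomposition $\ev[||z||^2_R] = ||\ev[z]||^2_R + \text{tr}(R\,\var(z))$, valid for any random vector $z$ and any $R=R^\top \succeq 0$. Applying it to $z = x+y$ and using independence of $x$ and $y$ to split the covariance, $\var(x+y) = \var(x)+\var(y)$, yields
\[
\ev[||x+y||^2_R] = ||\ev[x]+\ev[y]||^2_R + \text{tr}(R\,\var(x)) + \text{tr}(R\,\var(y)),
\]
which is precisely the first equality in the proof displayed above.

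First I would dispose of the deterministic term. Writing $a=\ev[x]$, $b=\ev[y]$ and expanding $||a+b||^2_R = ||a||^2_R + 2a^\top R b + ||b||^2_R$, I would bound the cross term by the Cauchy--Schwarz inequality for the $R$-seminorm, $|a^\top R b| \le ||a||_R\,||b||_R$, followed by Young's inequality $2||a||_R||b||_R \le \epsilon ||a||^2_R + \tfrac{1}{\epsilon}||b||^2_R$ for any $\epsilon>0$. This gives the deterministic bound $||a+b||^2_R \le (1+\epsilon)||a||^2_R + (1+\tfrac1\epsilon)||b||^2_R$, exactly the first inequality of the paper's proof.

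The remaining step is bookkeeping on the trace terms. After substituting the deterministic bound, I would regroup by writing $\text{tr}(R\,\var(x)) = (1+\epsilon)\text{tr}(R\,\var(x)) - \epsilon\,\text{tr}(R\,\var(x))$ and likewise $\text{tr}(R\,\var(y)) = (1+\tfrac1\epsilon)\text{tr}(R\,\var(y)) - \tfrac1\epsilon\,\text{tr}(R\,\var(y))$. This reconstitutes $\ev[||x||^2_R]=||\ev[x]||^2_R+\text{tr}(R\,\var(x))$ and $\ev[||y||^2_R]$ with the correct weights, leaving a surplus $-\epsilon\,\text{tr}(R\,\var(x)) - \tfrac1\epsilon\,\text{tr}(R\,\var(y))$. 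The claimed inequality then follows because each surplus term is nonnegative and can be discarded: for symmetric positive semidefinite $X,Y$ one has $\text{tr}(XY)\ge 0$, so with $X=R\succeq 0$ and $Y=\var(x)\succeq 0$ (resp. $\var(y)$) both trace terms are nonnegative, and multiplying by $\epsilon>0$ (resp. $\tfrac1\epsilon$) preserves the sign.

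I do not anticipate a genuine obstacle; the only place demanding care is the sign bookkeeping in the last step, together with the observation that independence enters solely through the additivity of covariances $\var(x+y)=\var(x)+\var(y)$. In fact the inequality holds even without independence, since the deterministic bound $||u+v||^2_R \le (1+\epsilon)||u||^2_R + (1+\tfrac1\epsilon)||v||^2_R$ is valid pointwise for realizations $u=x(\omega)$, $v=y(\omega)$ and may simply be integrated. I would note this as a remark but keep the moment-decomposition argument as the primary proof, since the application in Appendix~\ref{apdx:average_asymptotic_cost_bound} exploits the same explicit mean/variance split.
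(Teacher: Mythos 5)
Your proposal is correct and follows essentially the same route as the paper's proof: the mean/variance decomposition with $\var(x+y)=\var(x)+\var(y)$, Cauchy--Schwarz plus Young's inequality applied to the norms of the means, and discarding the surplus terms via $\text{tr}(XY)\geq 0$ for symmetric positive semidefinite matrices. Your closing remark that independence is not actually needed (integrate the pointwise deterministic bound) is a valid observation that goes slightly beyond the paper, but the core argument is identical.
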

\vspace{0.1cm}
\begin{lemma}\textit{Convex reformulation} \\
An optimization problem of the form 
\begin{mini*}[0]{X^{-1}}{ \! \! \! \! -\log \det X^{-1}}{}{}
\addConstraint{\! \! \! \! X - Z - A(\theta) Y A(\theta)^\top}{}{\succeq 0, \; \forall \theta \in \Theta}{}
\end{mini*}
is equivalent to the following convex reformulation
\begin{mini*}[0]{X^{-1}}{ \! \! \! \! -\log \det X^{-1}}{}{}
\addConstraint{\! \! \! \! \begin{bmatrix}
X^{-1}               & X^{-1}Z & X^{-1}A(\theta^j)          \\
Z X^{-1}      & Z       & 0                          \\
A(\theta^j)^\top X^{-1} & 0              & Y^{-1}
\end{bmatrix} }{}{\succeq 0}{}
\addConstraint{\! \! \! \! \forall j \in \{ 1,\dots,r \}, }{}{}{}
\end{mini*}
provided that~$Y,Z \succ 0$,~$Z=Z^\top$, and~$A(\theta)$ is of the form~$A(\theta) = A_0 + \sum_{i=1}^p A_i \; [\theta]_i$, with~$\theta \in \Theta := \text{co}\{\theta^{1} ,...,\theta^{r} \}$.

\begin{proof}
Pre- and post-multiply the matrix inequality by~$X^{-1}$ to obtain
\begin{equation*}
    X^{-1} - X^{-1} Z X^{-1} - X^{-1} A(\theta) Y A(\theta)^\top X^{-1}  \succeq 0, 
\end{equation*}
and use the following condition for positive semi-definite matrices based on the Schur complement, i.e. if
\begin{align*}
    & Y \succ 0 , \text{ then } \begin{bmatrix}
    X^{-1} - X^{-1}ZX^{-1} & X^{-1} A(\theta) \\
    A(\theta)^\top X^{-1} & Y^{-1}
    \end{bmatrix} \succeq 0  \\
    & \Leftrightarrow \\
    & X^{-1} - X^{-1}Z X^{-1} - X^{-1} A(\theta) Y A(\theta)^\top X^{-1}  \succeq 0.
\end{align*}
Applying again the Schur complement to the first diagonal block
\begin{equation*}
    Z \succ 0, X^{-1} - X^{-1}Z X^{-1} \succ 0 \Leftrightarrow \begin{bmatrix}
    X^{-1}               & X^{-1}Z \\
    Z X^{-1}      & Z  
    \end{bmatrix} \succ 0,
\end{equation*}
the optimization problem can be reformulated as:
\begin{mini*}[0]{X^{-1}}{ \! \! \! \! -\log \det X^{-1}}{}{}
\addConstraint{\! \! \! \! \begin{bmatrix}
X^{-1}               & X^{-1}Z & X^{-1}A(\theta)          \\
Z X^{-1}      & Z       & 0                          \\
A(\theta)^\top X^{-1} & 0              & Y^{-1}
\end{bmatrix} }{}{\succeq 0, \; \forall \theta \in \Theta}{}
\end{mini*}
to which we can apply Lemma~\ref{lemma:vertex_property} since the linear matrix inequality is affine in~$\theta$, and~$\theta$ belongs to a convex set~$\Theta$.

\end{proof}

\label{prop:convex_reformulation} 
\end{lemma}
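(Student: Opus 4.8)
The plan is to convert the semi-infinite constraint $X - Z - A(\theta)YA(\theta)^\top \succeq 0$ into a linear matrix inequality that is affine in $\theta$, at which point the vertex property (Lemma~\ref{lemma:vertex_property}) reduces the infinitely many constraints to the finite set indexed by the vertices $\theta^j$. First I would observe that the constraint itself forces $X \succ 0$: since $Z \succ 0$ and $A(\theta)YA(\theta)^\top \succeq 0$ (using $Y \succ 0$), we have $X \succeq Z + A(\theta)YA(\theta)^\top \succ 0$, so $X^{-1}$ exists and is positive definite. Pre- and post-multiplying the constraint by $X^{-1}$ preserves the inequality and gives
\begin{equation*}
M(\theta) := X^{-1} - X^{-1}ZX^{-1} - X^{-1}A(\theta)YA(\theta)^\top X^{-1} \succeq 0,
\end{equation*}
in which every term is at most quadratic in the decision variable $X^{-1}$, making it amenable to a Schur-complement representation.

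The core of the argument is two nested Schur complements, which I would verify by starting from the target block matrix and collapsing it back to $M(\theta)$. Denote by $T(\theta)$ the $3\times 3$ matrix in the statement. Taking the Schur complement of its bottom-right block $Y^{-1}\succ 0$ eliminates the last row and column and returns
\begin{equation*}
\begin{bmatrix} X^{-1} - X^{-1}A(\theta)YA(\theta)^\top X^{-1} & X^{-1}Z \\ ZX^{-1} & Z \end{bmatrix},
\end{equation*}
where the cross term contributes $-X^{-1}A(\theta)YA(\theta)^\top X^{-1}$ to the $(1,1)$ entry and nothing elsewhere, precisely because the $(2,3)$ and $(3,2)$ blocks of $T(\theta)$ are zero and $Z$ does not interact with $Y^{-1}$. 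Taking the Schur complement of the remaining block $Z\succ 0$ then returns $X^{-1} - X^{-1}ZX^{-1} - X^{-1}A(\theta)YA(\theta)^\top X^{-1} = M(\theta)$. Since both pivots $Y^{-1}$ and $Z$ are strictly positive definite, each step is an exact equivalence, so $T(\theta)\succeq 0 \iff M(\theta)\succeq 0 \iff X - Z - A(\theta)YA(\theta)^\top\succeq 0$ for each fixed $\theta$.

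Finally, because $A(\theta) = A_0 + \sum_{i=1}^p A_i[\theta]_i$ is affine in $\theta$, every entry of $T(\theta)$ is affine in $\theta$: the blocks $X^{-1}Z$, $Z$, and $Y^{-1}$ are constant, while the cross-blocks $X^{-1}A(\theta)$ and $A(\theta)^\top X^{-1}$ depend affinely on $\theta$. Lemma~\ref{lemma:vertex_property} therefore applies, so the single $\theta$-parametrized LMI holding for all $\theta\in\Theta = \text{co}\{\theta^1,\dots,\theta^r\}$ is equivalent to its satisfaction at each vertex $\theta^j$, yielding the claimed finite family. The main point requiring care—and the step I expect to be the most delicate—is the bookkeeping of strict versus non-strict inequalities through the Schur-complement manipulations: the equivalences rely on the pivot blocks $Y^{-1}$ and $Z$ being \emph{strictly} positive definite (guaranteed by the hypotheses $Y,Z\succ 0$), whereas the overall constraint is non-strict, so I would invoke the Schur complement in its semidefinite form with strictly definite pivots and confirm that the direction of each implication is preserved throughout.
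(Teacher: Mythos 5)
Your proposal is correct and follows essentially the same route as the paper's proof: pre- and post-multiplication by $X^{-1}$, two nested Schur complements with pivots $Y^{-1}$ and $Z$, and then the vertex property to reduce to the finitely many constraints at $\theta^1,\dots,\theta^r$. The only differences are presentational (you collapse the $3\times 3$ matrix down to $M(\theta)$ rather than building it up, and you make explicit the invertibility of $X$ and the strict-pivot bookkeeping, which the paper leaves implicit), so no substantive gap remains.
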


\end{document}